\newcommand{\blue}{\textcolor{black}}
\newcommand{\cyan}{\textcolor{black}}
\newcommand{\red}{\textcolor{black}}
\def\tsc#1{\csdef{#1}{\textsc{\lowercase{#1}}\xspace}}
\definecolor{codegreen}{rgb}{0,0.6,0}
\definecolor{codegray}{rgb}{0.5,0.5,0.5}
\definecolor{codepurple}{rgb}{0.58,0,0.82}
\definecolor{backcolour}{rgb}{0.95,0.95,0.92}
\lstdefinestyle{mystyle}{
    backgroundcolor=\color{backcolour},   
    commentstyle=\color{codegreen},
    keywordstyle=\color{blue},
    numberstyle=\tiny\color{codegray},
    stringstyle=\color{codepurple},
    basicstyle=\ttfamily\footnotesize,
    breakatwhitespace=false,         
    breaklines=true,                 
    captionpos=b,                    
    keepspaces=true,                 
    numbers=left,                    
    numbersep=5pt,                  
    showspaces=false,                
    showstringspaces=false,
    showtabs=false,                  
    tabsize=2
}
\newtheorem{theorem}{Theorem}
\newtheorem{lemma}[theorem]{Lemma}
\newtheorem{proposition}[theorem]{Proposition}
\newtheorem{corollary}[theorem]{Corollary}
\newcommand{\D}{d}
\newcommand{\N}{\mathcal{N}}
\newcommand{\T}{\mathcal{T}}
\newcommand{\x}{\textbf{\textup{x}}}
\newcommand{\child}[2]{c_#1(#2)}
\newcommand{\parent}[2]{p_#1(#2)}
\newcommand{\childIndex}[2]{ \text{\ifthenelse{#1=1}{\(j_1\)}{\(j_2\)}}}
\newcommand{\parentIndex}[2]{\text{\ifthenelse{#1=1}{\(j^1\)}{\(j^2\)}}}
\begin{document}
\let\WriteBookmarks\relax
\def\floatpagepagefraction{1}
\def\textpagefraction{.001}

\shorttitle{A QUBO Formulation for the Tree Containment Problem}    
\shortauthors{M.J. Dinneen, P.S. Ghodla, S. Linz}  

\title[mode = title]{A QUBO Formulation for the Tree Containment Problem}

\author[1]{Michael J. Dinneen}%
[orcid=0000-0001-9977-525X]
\ead{mjd@cs.auckland.ac.nz} 

\author[1]{Pankaj S. Ghodla}[]
\ead{pgho580@aucklanduni.ac.nz} 

\author[1]{Simone Linz}%
[orcid=0000-0003-0862-9594]
\ead{s.linz@auckland.ac.nz}
\cormark[1]

\address[1]{School of Computer Science, University of Auckland, Auckland, New Zealand}

\cortext[cor1]{Corresponding author} 

\begin{abstract}[S U M M A R Y]
Phylogenetic (evolutionary) trees and networks are leaf-labeled graphs that are widely used to represent the evolutionary relationships between entities such as species, languages, cancer cells, \blue{and} viruses. To reconstruct and analyze phylogenetic networks, the problem of deciding whether or not a given rooted phylogenetic network embeds a given rooted phylogenetic tree is of recurring interest. This problem, formally know as Tree Containment, is NP-complete in general and polynomial-time solvable for certain classes of phylogenetic networks. In this paper, we connect ideas from quantum computing and phylogenetics to present an efficient Quadratic Unconstrained Binary Optimization formulation for Tree Containment in the general setting.
For an instance \((\N,\T)\) of Tree Containment, where $\N$ is a phylogenetic network with  $n_\N$ vertices and $\T$ is a phylogenetic tree with $n_\T$ vertices, the number of logical qubits  that are required for our formulation is \(O(n_\N n_\T)\).
\end{abstract}

\begin{keywords}
Quantum computing\sep
QUBOs\sep
Phylogenetic trees and networks\sep
Tree Containment
\end{keywords}

\maketitle


\section{Introduction} \label{sec: introduction}

Phylogenetics is the study of evolutionary histories and relationships between different, often biological, entities such as species, genes, viruses, or languages that are generically referred to as taxa. Traditionally, rooted leaf-labeled trees, which are known as phylogenetic trees, have been widely used to represent and analyze evolutionary relationships that are dominated by tree-like processes like speciation \cite{felsenstein2004inferring}. A phylogenetic tree is reconstructed from biological sequence data (e.g. DNA or protein \blue{sequences}) under various optimization criteria or evolutionary models so that the resulting tree $\T$ `in some sense' best explains a given dataset. Each leaf of $\T$ is labeled by a taxon whereas all inner vertices of $\T$ are unlabeled. The latter can be thought of as hypothetical ancestors, including extinct species, for which no data is available. Although phylogenetic trees are an extensively used in \blue{evolutionary} biology and researchers are now able to reconstruct such trees for thousands of taxa \cite{jetz2012global}, phylogenetic trees cannot represent complex evolutionary relationships that are the result of non-treelike processes such as hybridization \blue{or} horizontal gene transfer, which are common in many groups of organisms \cite{ottenburghs2019multispecies,richardson2007horizontal,soucy2015horizontal}. For example, it has been observed that horizontal gene transfer contributes to about \(10\%\)--\(20\%\) of all genes in prokaryotes \cite{koonin2001horizontal}. Non-treelike processes, collectively referred to as reticulation, result in species whose DNA is a mosaic of DNA derived from different ancestors. To accurately describe complex evolutionary histories that include reticulation, rooted leaf-labeled graphs, called phylogenetic networks \cite{bapteste2013networks,blais2021past,huson2010phylogenetic}, are now widely acknowledged to complement phylogenetic trees.

A particularly well-studied problem, which arises in the analysis of rooted
phylogenetic networks through the lens of rooted phylogenetic trees, is the
following embedding problem.  Given a rooted phylogenetic tree $\T$
and a rooted phylogenetic network
$\N$ that have both been reconstructed for the same set of taxa, does
$\N$ embed $\T$?  This decision problem, which we will make more precise
in the next section, is called  Tree Containment. Without imposing
any structural constraints on $\N$, Tree Containment is NP-complete
\cite{kanj2008seeing}. However, it has also been shown that Tree
Containment is polynomial-time solvable, for example, for so-called
tree-child or  level $k$-networks \cite{van2010locating}. Afterwards,
Gunawan et al.~\cite{gunawan2015locating}, and Bordewich and Semple
\cite{bordewich2016reticulation} independently showed that Tree Containment
is solvable in cubic time for reticulation-visible networks, a superclass of
tree-child networks. Since then, various algorithms have been developed to
solve Tree Containment, with the fastest such algorithms having a running
time that is linear in the number of taxa. For example, see a recent
linear-time algorithm to solve Tree Containment for reticulation-visible
networks~\cite{weller2018linear} and references therein. Although these
substantial improvements have turned Tree Containment into one of the
most well-studied problems in \blue{mathematical} research on phylogenetic networks, much less
is known about how to `efficiently' solve Tree Containment for arbitrary
rooted phylogenetic networks. \red{In this case, the current best algorithms are fixed-parameter tractable algorithms, where, for a given phylogenetic network $\N$, the parameter is either the number of vertices in $\N$ whose in-degree is at least two or the treewidth of $\N$~\cite{van2022embedding,van2018unrooted}.}

\blue{In this paper, we take a quantum computing approach to Tree Containment.} Quantum computers are known to be able to solve certain problems in
significantly lower time complexity than corresponding current-best
classical algorithms. Although it is still debatable whether or not
quantum computers can solve NP-hard problems in polynomial time, they
have provided efficient solutions for several instances of NP-hard 
problems~\cite{calude2017solving,calude2017qubo,LucasNP,mahasinghe2019solving,mcgeoch2014adiabatic}.

In a talk about simulating the quantum mechanical process,
Feynman argued about simulating physics using a quantum computer
\cite{feynman2018simulating}. This talk sparked interest in building a
quantum computer. Soon after Feyman's talk, Deutsch \cite{deutsch1985quantum}
developed the universal quantum model of computation called the
\textit{Quantum Gate Model} \cite{mcgeoch2014adiabatic}. The central goal
behind this new model of computation was to exploit the properties of quantum
mechanics to get a \textit{quantum-speedup} over the classical model of
computation. Research conducted so far on this topic shows much promise with
the two primary outstanding examples being Grover's and Shor's algorithms
that we briefly describe next. Grover's algorithm \cite{grover1996fast}
searches through an unsorted database of size \(N\), of which only one record
satisfies a particular property, in \(O(\sqrt{N})\) steps. In contrast,
any classical algorithm  to solve this database problem certainly takes
\(O(N)\) steps as it needs to iterate through a significant fraction (on
average $ N/2$) of all records. Shor's algorithm \cite{shor1994algorithms}
factorizes integers. Its computational complexity is polynomial in the
number of digits of the integer to be factorized. On the other hand, no
classical algorithm is known that factorizes integers in polynomial time.

\textit{Adiabatic Quantum Computing} (AQC) is an alternative to the quantum gate model that was first described in \cite{farhi2000quantum}. Subsequently, Aharonov et al.~\cite{aharonov2008adiabatic} have developed an adiabatic simulation for any given quantum algorithm, which implies that AQC and the quantum gate model are polynomially equivalent. AQC is based on the evolution of a time-dependent Hamiltonian that transitions from an initial Hamiltonian to a final Hamiltonian. The initial Hamiltonian's ground state is easy to construct, and the final Hamiltonian's ground state encodes the solution to a given problem. For a detailed explanation of how this evolution between initial and final Hamiltonian occurs, we refer the interested reader to \cite{farhi2000quantum}. The primary advantage of AQC over the quantum gate model is that a relatively easy to build \textit{quantum annealer} \cite{mcgeoch2014adiabatic}, \blue{which} is based on AQC, can be used to identify the minimum of an objective function.

D-Wave Systems Inc. is a Canadian quantum computing company that has developed the following quantum annealers for AQC. 

\begin{center} \label{temp}
	\begin{tabular}{ |c|c|c| } 
	\hline
		\blue{Annealer} & Number of qubits & Number of couplers \\
	\hline
		D-Wave One (2011) & 128 & 352 \\
		D-Wave Two (2012) & 512 & 1,472 \\
		D-Wave 2X (2015) & 1000 & 3,360 \\
		D-Wave 2000Q (2017) & 2048 & 6,016 \\
		D-Wave Advantage (2019) & 5640 & 40,484 \\
	\hline
	\end{tabular}
\end{center}
In the table above, the number of qubits refers to the number of physical qubits available and the number of couplers refers to the number of connections between the physical qubits in a quantum annealer. Currently, both D-Wave 2000Q and D-Wave Advantage can be accessed through the D-Wave's website\footnote{Website: \url{https://www.dwavesys.com/}} and can solve problems for which an equivalent \textit{Ising} or \textit{Quadratic Unconstrained Binary Optimization (QUBO)} formulation exists. 

The main contribution of this paper is a QUBO formulation of Tree Containment for when the input is not restricted to a particular class of rooted phylogenetic networks. To the best of our knowledge, this is the first time that unconventional computing is used to approach a problem from phylogenetics. \blue{For two recent surveys on potential future applications of quantum computing  in computational biology, we refer the reader to~\cite{fedorov2021towards,outeiral2021prospects}.}

The remainder of the paper is organized as follows. Section \ref{sec: definitions} contains the basic definitions from phylogenetics, followed by a brief introduction to QUBO and the general methodology in Section \ref{sec: QUBO}. Then, in Section \ref{sec: TC problem}, we present \blue{a} reduction of Tree Containment to QUBO. We finish with an example and some experimental results in Section \ref{sec: results}, and a short conclusion in Section \ref{sec: conclusion}.

\section{Preliminaries from Phylogenetics} \label{sec: definitions}
This section introduces notation and terminology from phylogenetics. Throughout the paper, $X$ denotes a non-empty finite set. Furthermore, all logarithms are base 2, and we write \( \lg(x)\) to refer to \(\log_2(x)\).

Let \(G\) be a directed graph, with vertex set \(V(G)\) and edge set \(E(G)\). Let $u$ and $v$ be two vertices of $G$. We say that $u$ is a {\it parent} of $v$ \blue{and that $v$ is a {\it child} of $u$} if $(u,v)\in E(G)$. 
A \textit{directed path} of $G$ is a sequence $(v_1,v_2,\ldots,v_k)$ of distinct vertices in $V(G)$ such that $(v_i,v_{i+1})\in E(G)$ for each $i\in\{1,2,\ldots,k-1\}$. 
Similarly, a {\it path} of $G$ is a sequence $(v_1,v_2,\ldots,v_k)$ of distinct vertices in $V(G)$ such that $(v_i,v_{i+1})$ or $(v_{i+1},v_i)$ is an element in $E(G)$ for each $i\in\{1,2,\ldots,k-1\}$.
We say that \(G\) is \textit{weakly connected} (or short, {\it
connected}) if there is a path between any two vertices in 
\blue{$G$}. 
A vertex $u \in V(G)$ is called a \textit{root} if $u$ has in-degree~0 and there exists a directed path from \(u\) to \(v\) 
for all \(v \in V(G)\backslash\{u\}\). 
Furthermore, $x \in V(G)$ is called a \textit{terminal vertex} if it has 
out-degree~0. 
Lastly, with \((v_1,v_2,v_3)\) being a directed path in $G$ such that $v_2$ has in-degree~1 and out-degree~1, 
the operation of \textit{suppressing} \(v_2\) in $G$ results in a new directed graph with vertex set \(V(G) \backslash \{v_2\} \) and edge set \((E(G) \backslash \{(v_1,v_2),(v_2,v_3)\}) \cup \{(v_1,v_3)\} \).

We now turn to a particular \blue{class of} directed graphs that will play an important role in what follows. A \textit{rooted binary phylogenetic network} \(\N\) on \(X\) is a rooted acyclic directed graph with no two edges in parallel that satisfies the following three properties.
\begin{enumerate}
	\item The (unique) root has in-degree~0 and out-degree~2.
	\item A vertex with out-degree~0 has in-degree~1, and the set of vertices with out-degree~0 is \(X\).
	\item All remaining vertices have either in-degree~1 and out-degree~2, or in-degree~2 and out-degree~1.
\end{enumerate}
We call $X$ the {\it leaf set} of $\N$. For technical reasons, if $|X|
= 1$, then we  allow $\N$ to consist of the single vertex in $X$.
Let $\N$ be a rooted binary phylogenetic network. A vertex of $\N$
is called a \textit{tree vertex} if it has out-degree~2. Similarly, a vertex of $\N$ is called
a \textit{reticulation vertex} if it has in-degree~2 and out-degree~1. To
illustrate, see Figure~\ref{fig: phylogenetic network and trees}(a) for an
example of a rooted binary phylogenetic network with one reticulation vertex,
four tree vertices (one is also root), and four leaves. We call \(\N\) a \textit{rooted
binary phylogenetic \(X\)-tree} if \(\N\) is a rooted binary phylogenetic
network with no reticulation vertex. To ease reading, we will refer to a
rooted binary phylogenetic network and a rooted binary phylogenetic tree as
a {\it phylogenetic network} and a {\it phylogenetic tree}, respectively,
since all such networks and trees in this paper are rooted and binary.

Let $\N_1$ and $\N_2$ be two phylogenetic networks on $X$ with vertex and edge sets $V_1$ and $E_1$, and $V_2$ and $E_2$, respectively. We say that $\N_1$ is {\em isomorphic} to $\N_2$ if there is a bijection $\varphi: V_1\rightarrow V_2$ such that $\varphi(x)=x$ for all $x\in X$, and $(u, v)\in E_1$ if and only if $(\varphi(u), \varphi(v))\in E_2$ for all $u, v\in V_1$.

\begin{figure}[h]
    \centering
    \begin{subfigure}[b]{0.3\textwidth}
        \centering
        \includegraphics[width=\textwidth]{./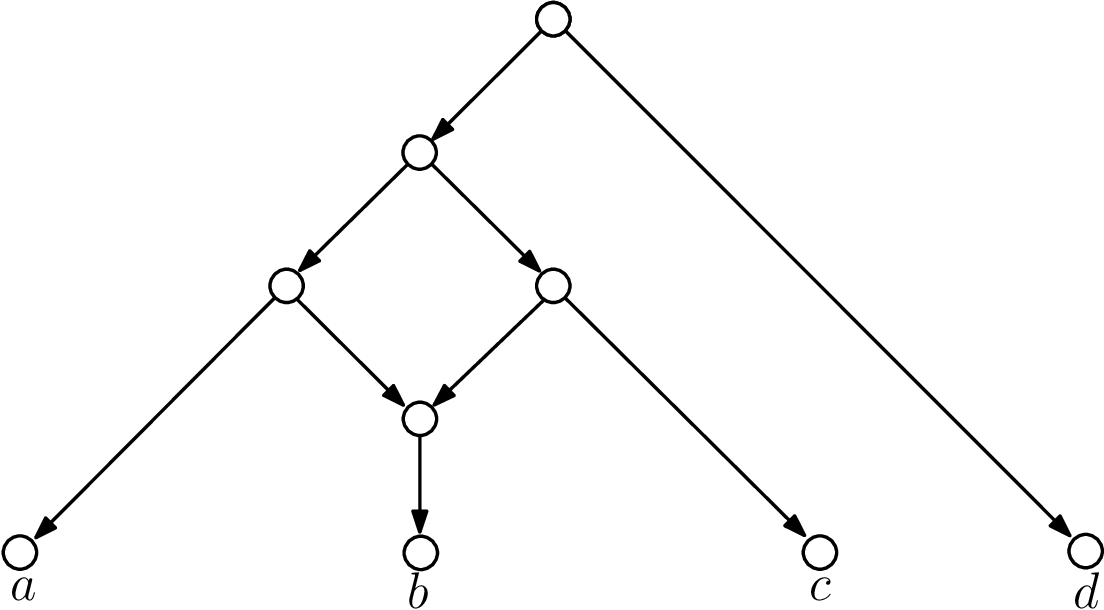}
        \caption{\(\N\)}
        \label{fig: example of phylogenetic network}
    \end{subfigure}
    \hfill
    \begin{subfigure}[b]{0.3\textwidth}
        \centering
        \includegraphics[width=\textwidth]{./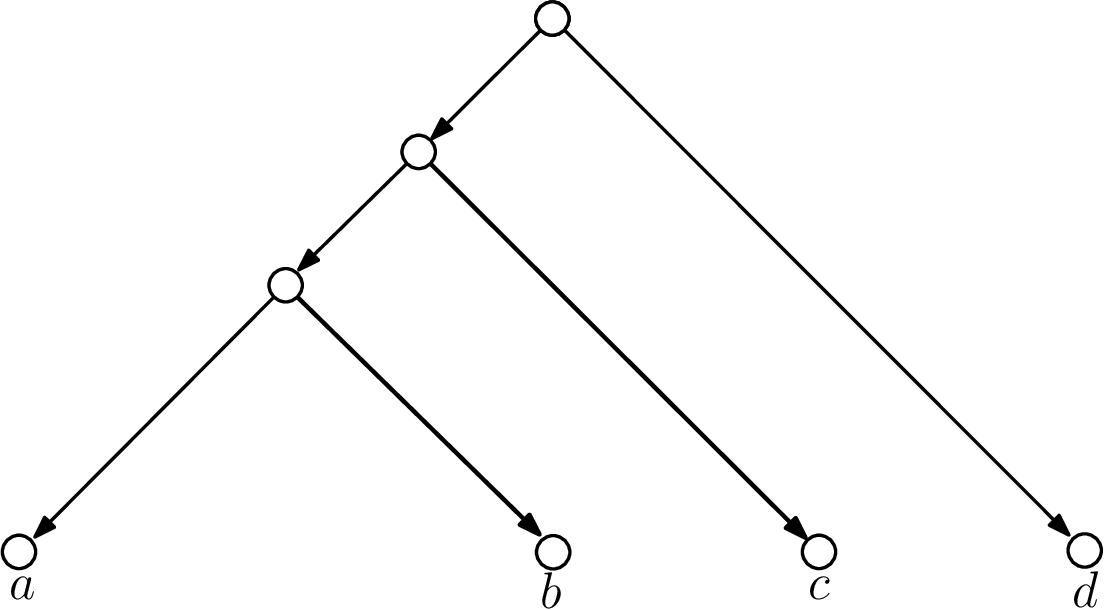}
        \caption{\(\T_1\)}
        \label{fig: example of phylognetic tree 1}
    \end{subfigure}
    \hfill
    \begin{subfigure}[b]{0.3\textwidth}
        \centering
        \includegraphics[width=\textwidth]{./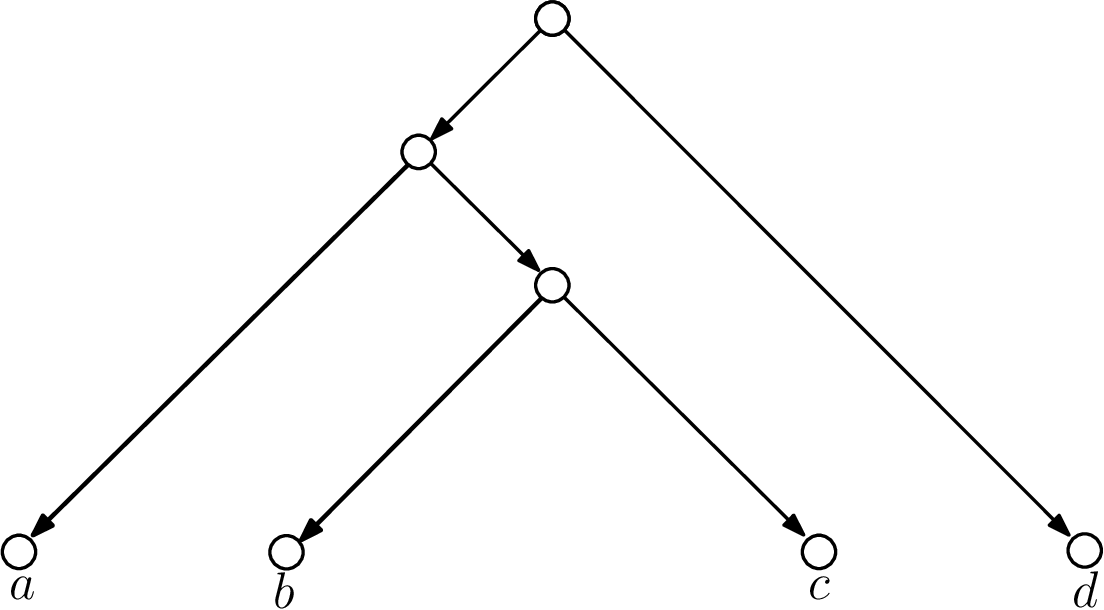}
        \caption{\(\T_2\)}
        \label{fig: example of phylognetic tree 2}
    \end{subfigure}
       \caption{(a) A rooted binary phylogenetic network \(\N\) on \(X = \{a, b, c, d\}\) with a single reticulation vertex. (b) and (c) The two rooted binary phylogenetic $X$-trees  $\T_1$ and $\T_2$ displayed by $\N$.}
      \label{fig: phylogenetic network and trees}
\end{figure}

Now, let \(\N\) be a phylogenetic network \blue{on $X$}, and let \(\T\) be a phylogenetic \(X\)-tree. We say \(\N\) \textit{displays} \(\T\) if \(\T\) can be obtained from \(\N\) by deleting vertices and edges, and by suppressing any resulting vertices of in-degree~1 and out-degree~1. Referring back to Figure \ref{fig: phylogenetic network and trees}, the phylogenetic network that is shown in (a) displays the two phylogenetic trees that are shown in (b) and (c).

We are now in a position to formally state \blue{the} Tree Containment \blue{decision problem}.\\

\noindent \textbf{Tree Containment} $(\N,\T)$  \\ 
\textbf{Input:} A phylogenetic \(X\)-tree \(\T\)and a phylogenetic network \(\N\) on \(X\). \\
\textbf{Output:} Does \(\N\) display \(\T\)?

\section{Quadratic Unconstrained Binary Optimization (QUBO)} \label{sec: QUBO}

In this section, we give a brief introduction to QUBO and discuss the general methodology to solve a problem using AQC. 

\subsection{What is QUBO?}\label{sec:QUBOdef}
QUBO is an NP-hard combinatorial optimization 
problem~\cite{glover2018tutorial,pardalos1992complexity}. Instances of many classical NP-hard problems such as finding a maximum cut or a minimum vertex coloring of a graph can be reduced to equivalent instances of QUBO. More specifically, QUBO  minimizes a quadratic objective function 
\(F: \mathbb{B}^n  \rightarrow \mathbb{R}\).
Using matrix notation, the quadratic objective function has the form \(H(\x) = \x^T Q \x \), where \(\x^T = [x_1,x_2, \dots, x_n]\) is a row vector of $n$ binary variables and $Q$ is and upper triangular \(n \times n\) matrix. Then the QUBO problem is that of solving the following equation
\begin{align*}
	x^{*} = \mathop{\text{min}}_{\x} \sum_{i = 1}^{n} \sum_{j = i}^{n} Q_{i,j} x_{i} x_{j} = \mathop{\text{min}}_{\x} \x^T Q \x,
\end{align*}
where the minimum is taken over all binary vectors \(\x\).  We  use \(x^{*}\)
to denote the minimum of  \(H\) and \(\x^{*}\) to denote a binary vector that
yields \(x^{*}\). In the quantum annealing model of  QUBO, the matrix \(Q\)
represents the problem Hamiltonian and each \(x_i\) in \(\x\) represents
a logical qubit. The logical qubits are different from the physical qubits
(qubits on a quantum annealer) as several physical qubits could be required
to represent a single logical qubit when we embed a given QUBO (non-zero
entries represent adjacency structure) onto the host
graph (physical qubits as vertices and couplers as edges) of a quantum annealer. The non-zero off-diagonal
entries, i.e. \(Q_{i,j}\) where \(i<j\), correspond to the coupler biases
between \(x_i\) and \(x_j\). Furthermore, the diagonal entries correspond
to the qubit biases, which \blue{refer} to the external magnetic fields applied
on the qubits.

\subsection{Methodology}

Suppose that we want to solve a problem \(P\) using the AQC model.  First we
need to establish a polynomial-time reduction that reduces a given instance
of \(P\) to an instance Q(\(P\)) of QUBO form.  Second, we need to ensure that the
\(n \times n\) matrix in Q(\(P\)) is as small and sparse as possible. This is
because the size and density of the QUBO matrix (more precisely, the density
of the graph whose weighted adjacency matrix is the QUBO matrix) have a
significant impact on the probability of the system being in the (minimum) ground
state in the final Hamiltonian.
We want to have a high enough probability
for the system to be in the ground state in the final Hamiltonian so that
we can efficiently query D-Wave's quantum annealers to solve Q(\(P\)).
Third, viewing the QUBO matrix in Q\((P)\) as a weighted adjacency matrix
of a graph \(G\), we (minor) embed this graph \(G\) onto the host graph of
a D-Wave's quantum annealer, i.e., either the Chimera graph (D-Wave 2000Q)
or the Pegasus graph (D-Wave Advantage). A {\it minor embedding} of a graph
\(G\) onto a graph \(H\) is a function \(\phi: V(G) \rightarrow 2^{V(H)}\)
that satisfies the following properties:

\begin{enumerate}
    \item The set of vertices $\phi(u)$ and $\phi(v)$ are disjoint for all $u,v \in V(G)$, where $u \neq  v$.
    \item For each $u \in V(G)$, there exists a subset $E^{\prime} \subset E(H)$ such that the subgraph $H^{\prime} = (\phi(u), E^{\prime})$ of $H$ is connected.
    \item For each $\{u, v\} \in E(G) $, there  exist vertices $u^{\prime} ,v^{\prime} \in V(H)$ such that $u^{\prime} \in \phi(u)$, $v^{\prime} \in \phi(v)$, and ${\{u' ,v^{\prime}\}} \in E(H)$.
\end{enumerate}
If  \(G\) is bigger than the host graph or if \(G\) cannot be embedded onto the host graph because it is too dense, then the package  \textit{qbsolv} that is provided by D-Wave can be used to break the \(n \times n\) matrix in Q(\(P\)) into sub-matrices and solve them separately before combining the results to get a solution for the initial problem. This package uses techniques from well-known paradigms such as divide-and-conquer and dynamic programming; for more information, see \cite{D-Wave_qbsolv}. In the last step, a quantum annealer is queried to compute \(x^*\) and \(\x^*\). 

A problem one might immediately see is that finding a minor embedding of
a graph $G$ onto a host graph is an NP-hard problem in itself. 
Furthermore, if we can find an embedding there are often better ones 
(e.g.~those that minimize the maximum or average cardinality of $\phi(u)$);
here the chance of the quantum annealer successfully solving \blue{Q$(P)$}
increases with better embeddings due to hardware limitations.
The extended optimization problem of finding an embedding with 
maximum mapping size at most $k$ \blue{(i.e. $|\phi(u)|\leq k$ for each $u\in V(G)$)}) is also NP-hard. \blue{Note that, if} we fix $k=1$, then we solve the NP-hard subgraph isomorphism problem and if we fix $k=|\blue{V(H)}|$, then we
solve the original minor-embedding problem.
However, as it is not necessary to find an optimal embedding with respect
to some criteria, we can use a probabilistic algorithm (on a classical
computer) with a polynomial-time overhead to find a `\blue{good}' minor embedding of $G$.

After getting the results from a quantum annealer, we need a way to decode the input values \(\x^*\) that yield
\(x^{*}\) for an instance of a problem \(P\). In practice, the final
solution is probabilistic. Since we might not get the optimal answer the
first time that we query a quantum annealer, we may need to query it several times (often more than 100 times, in practice) to increase the probability of getting the optimal answer. 
If \(P\) is a decision problem in NP, as it is in the case of Tree Containment, then it is often possible to reduce \(P\) to Q(\(P\)) such that the optimal value after post-processing is 0 if the answer to $P$ is `yes'. 
Post-processing is the process of adding an offset to the optimal value \(x^*\). Using this approach, we know when we get the optimal solution from a quantum annealer. We use this approach in our reduction from Tree Containment to QUBO in Section~\ref{sec: TC problem}. 

\subsection{Reducing Higher-Order Functions into QUBO}

Some problems naturally reduce to a binary cubic, or \blue{binary} higher-order
function. Although such a function does not immediately fit into the QUBO
framework, it can be recast as a binary quadratic function and then be solved
with D-Wave's quantum annealers. We introduce additional binary variables
during the recast and replace the higher-order terms with \blue{additional} penalty functions
that have binary quadratic order terms.
See~\cite{calude2017solving} for a detailed example of converting a traditional Integer Linear Programming
formulation for a constrained optimization problem to QUBO form, which
uses auxiliary variables for reducing higher-order \blue{functions} and models
non-binary variables as sets of binary variables.

The following lemma and proposition
reduces a binary cubic function to a binary quadratic function. The lemma
has been taken from~\cite{glover2018tutorial}, where no proof is given. For
reasons of completeness, we next establish a formal proof.

\begin{lemma}\label{lem: reduction technique}
Let $x_1$, $x_2$, and $y$ be three binary variables. Furthermore, let $P = x_1 x_2 - 2x_1 y -2 x_2 y + 3y$. Then  \( P = 0\) if and only if \(y = x_1 x_2\). 
\end{lemma}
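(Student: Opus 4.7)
The plan is to prove the biconditional by exhaustive case analysis on the pair $(x_1,x_2)\in\{0,1\}^2$, using the fact that $y\in\{0,1\}$ as well. Since there are only four possible assignments to $(x_1,x_2)$ and, for each, the expression $P$ collapses to a simple linear function of $y$, the whole argument reduces to a short table check. This is cleaner than trying to manipulate $P$ as a single expression, because the biconditional $P=0\iff y=x_1x_2$ is most transparent when one exposes the dependence on $y$ after fixing the other two variables.

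Concretely, for each of the four cases I would substitute into $P = x_1 x_2 - 2x_1 y - 2 x_2 y + 3y$ and simplify. When $(x_1,x_2)=(0,0)$, the expression becomes $P = 3y$, so $P=0$ forces $y=0=x_1x_2$, and conversely $y=x_1x_2=0$ gives $P=0$. When $(x_1,x_2)=(0,1)$ or $(1,0)$, the two quadratic penalty terms contribute $-2y$ in total and $P = y$, so again $P=0\iff y=0=x_1x_2$. Finally, when $(x_1,x_2)=(1,1)$, the three $y$-terms sum to $-y$ while the constant is $1$, giving $P = 1-y$; hence $P=0\iff y=1=x_1x_2$. In all four cases the biconditional holds, which completes the argument.

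There is no genuine obstacle: the lemma is a verification lemma for Rosenberg's quadratization trick, and the main thing to be careful about is bookkeeping of signs in the $(1,1)$ case, where one must check that $-2-2+3=-1$ so that $P=1-y$ (and not something like $1+y$, which would fail). A slightly slicker alternative, which I would mention as a remark but not use as the main proof, is to observe that $P$ can be rewritten as $P = (x_1 x_2 - y) + 2y(1-x_1)(1-x_2) \cdot (\text{something})$-style nonnegative combination, showing $P\geq 0$ with equality exactly when $y=x_1x_2$; however, the case analysis above is both shorter and more elementary, so I would present it as the proof.
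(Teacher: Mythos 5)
Your proof is correct and takes essentially the same approach as the paper: an exhaustive case analysis over the four assignments of $(x_1,x_2)$, reducing $P$ to a linear function of $y$ in each case. The only cosmetic difference is that you establish both directions of the biconditional within each case, whereas the paper handles the converse direction separately by substituting $y=x_1x_2$ and using $x^2=x$; the substance is identical.
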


\begin{proof}
	\((\implies)\) Suppose that $P=0$. We consider four cases.
	\begin{enumerate}
		\item If \(x_1 = 1, x_2 = 1\) and \(P=0\), then \(0 = 1 - 2y -2y + 3y \) and hence \( y = 1\).
		\item If \(x_1 = 1, x_2 = 0\) and \(P=0\), then \(0 = 0 -2y +3y \) and hence \(y = 0\).
		\item If \(x_1 = 0, x_2 = 1\) and \(P=0\), then \(0 = 0 -2y +3y \) and hence \( y = 0\).
		\item If \(x_1 = 0, x_2 = 0\) and \(P=0\), then \(0 = 0 + 3y \) and hence \( y = 0\).
	\end{enumerate}
	From each case, it follows that \(y = x_1 x_2\).
	
	\((\impliedby)\) Now suppose that \(y = x_1 x_2\). Then, since $x^2 =x$ for any binary variable, we have $$P = x_1 x_2 - 2x_1 x_1 x_2 -2 x_2 x_1 x_2 + 3x_1 x_2 = x_1 x_2 - 2x_1 x_2 -2x_1 x_2 + 3x_1 x_2  =  0.$$
The lemma follows.
\end{proof}

\begin{proposition} \label{prop: reduction technique}
	\blue{In a QUBO framework, a} binary cubic \blue{term} can be converted into \blue{an equivalent} binary quadratic  \blue{term}.
\end{proposition}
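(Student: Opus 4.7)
The plan is to use Lemma~\ref{lem: reduction technique} as a linearization gadget. Take an arbitrary binary cubic term $c\cdot x_1 x_2 x_3$, with $c\in\mathbb{R}$ and $x_1, x_2, x_3$ binary. I would introduce a fresh binary variable $y$ intended to represent the product $x_1 x_2$, replace the cubic term by the binary quadratic term $c\cdot y\cdot x_3$, and add the scaled penalty
\[
M\bigl(x_1 x_2 - 2 x_1 y - 2 x_2 y + 3 y\bigr)
\]
to the objective, where $M>0$ is a constant to be fixed later. Every monomial of the resulting expression has degree at most two in $\{x_1, x_2, x_3, y\}$, so it is a legal QUBO objective.

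To argue equivalence, I would invoke Lemma~\ref{lem: reduction technique} together with the observation, already implicit in its proof, that $P = x_1 x_2 - 2x_1 y - 2x_2 y + 3y$ is nonnegative on $\{0,1\}^3$: enumerating the eight assignments shows that $P\in\{0,1,3\}$ with $P=0$ iff $y=x_1x_2$. Consequently, if $M$ is chosen strictly larger than $|c|$ (and, for a full cubic objective, larger than the worst-case absolute contribution that the remaining terms could save by violating $y=x_1x_2$), then every minimizer of the modified objective must satisfy $y=x_1x_2$; otherwise the penalty strictly exceeds any possible saving elsewhere. Under the constraint $y=x_1x_2$, the replacement term $c\cdot y\cdot x_3$ coincides numerically with $c\cdot x_1 x_2 x_3$ and the penalty contributes $0$, so the modified quadratic objective attains exactly the same minimum value as the original, and an optimal binary vector of one yields an optimal binary vector of the other by projecting away (resp.\ computing) $y$.

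The main obstacle is really just the choice of $M$: the construction is a \emph{penalty method}, and its correctness hinges on $M$ being large enough to render every violating assignment strictly worse than any feasible one. For a single isolated cubic term, $M>|c|$ suffices; for a cubic polynomial with several cubic monomials linearized simultaneously, one introduces a separate auxiliary variable and penalty per monomial and takes $M$ to dominate the sum of absolute values of all non-penalty coefficients, which is still a fixed constant depending only on the input. Although the proposition is stated for cubic terms, the same substitution $x_i x_j\mapsto y$ can be iterated to reduce monomials of arbitrary degree to quadratic form, so the technique extends beyond degree three with no new ideas.
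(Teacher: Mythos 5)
Your proposal is correct and rests on exactly the gadget the paper uses: the penalty $P = x_1x_2 - 2x_1y - 2x_2y + 3y$ of Lemma~\ref{lem: reduction technique}, a fresh variable $y$ standing in for the product $x_1x_2$, and the rewriting of the cubic term as a quadratic term in $y$ and $x_3$. The difference is one of rigor rather than of route. The paper substitutes $x_3y + P$ with penalty weight $1$ and states, conditionally, that \emph{if} $P$ is minimized along with the rest of the objective then $y$ represents $x_1x_2$; you instead attach an explicit weight $M$ to the penalty and discharge that condition by choosing $M$ to dominate the worst-case saving that the non-penalty part of the objective could obtain by violating $y = x_1x_2$. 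Your supporting observation that $P \in \{0,1,3\}$ on binary inputs, hence $P \geq 0$ with equality exactly when $y = x_1x_2$, is correct and is precisely what makes the penalty method sound; with it, your argument gives the honest general statement for a cubic monomial with an arbitrary real coefficient, and for a cubic polynomial linearized monomial by monomial, whereas a weight-$1$ substitution need not force $y = x_1x_2$ at a minimizer once other terms can reward the violation by more than $1$. In the paper's actual application this issue is moot because the penalties $P_3$, $P_5$, $P_8$ and the terms that consume their slack variables all sit inside a block scaled by the large constant $B$ and the target value is $0$, but as a free-standing proof of the proposition your version is the more complete one.
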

\begin{proof}
   Let $x_1x_2x_3$ be a binary cubic term. Furthermore, let \(y\) be a new binary variable, and let $$P = x_1 x_2 - 2x_1 y -2 x_2 y + 3y$$ be a binary quadratic penalty term. It follows by Lemma~\ref{lem: reduction technique} that  \( P = 0\) if and only if \(y = x_1 x_2\). Hence, substituting $x_1x_2x_3$ with $x_3y+P$ replaces a binary cubic term with \blue{four} binary quadratic terms \blue{and one binary linear term}. Specifically, if we minimize  \(P\) when we minimize the resulting  quadratic function, then \(y\)  represent \(x_1 x_2\). Applying this substitution technique repeatedly to each binary cubic term of a binary cubic function results in a binary quadratic function.  	
\end{proof}

\noindent Note that we can recursively apply Proposition~\ref{prop: reduction technique} to reduce any binary higher-order function to a binary quadratic function.

\section{Reducing Tree Containment to QUBO} \label{sec: TC problem}
In this section, we present a reduction from Tree Containment to QUBO. For an instance of Tree Containment that consists of a phylogenetic network $\N$ on $X$ and a phylogenetic $X$-tree $\T$, the QUBO formulation requires $O(n_{\T} n_{\N})$ logical qubits, where \(n_{\N}\) is the number of vertices in  \(\N\) and \(n_{\T}\) is the number of vertices in \(\T\). 

\subsection{QUBO Formulation} \label{sec: QUBO Formulation}

Throughout this section, let $\N$ be a phylogenetic network on $X$, and let $\T$ be a phylogenetic $X$-tree. Let \(E(\N)\) be the set of edges and \(V(\N)=\{v_0,v_1,v_2,\ldots,v_{n_\N-1}\}\) be the set of vertices of \(\N\), and let \(E(\T)\) be the set of edges and \(V(\T)=\{u_0,u_1,u_2,\ldots,u_{n_\T-1}\}\) be the set of vertices of \(\T\). Without loss of generality, we may assume that $u_0$ is the root of $\T$. Additionally, let $u_{n_\T}$ be a vertex that is not an element in $V(\T)$. 

Intuitively, if $\N$ displays $\T$, then there exists a mapping that maps each vertex of $\T$ to a vertex of $\N$ and each edge of $\T$ to a directed path of $\N$. The following QUBO formulation for Tree Containment establishes a mapping (detailed below) that maps each vertex of 
$V(\T)$ 
to at least one vertex of $\N$. In this mapping, $u_{n_\T}$ is mapped to each vertex of $\N$ that is not in the image of any vertex in $V(\T)$. 

Now, let $$m=n_{\N}(n_{\T}+1) + (n_{\T} - 1 )\left(1 + \left\lfloor \lg(n_{\N}-n_{\T})\right\rfloor\right) + n_\T( \alpha + \beta)+ 2 \beta \gamma,$$ and let $\x \in \mathbb{B}^m$ be a vector of binary variables, where \(n_{\N} = \left\lvert V(\N) \right\rvert \), \(n_{\T} = \left\lvert V(\T) \right\rvert\), \(\gamma = n_\T - |X|\), and \(\alpha\) (resp. $\beta$) equals the number of reticulation vertices (resp. tree vertices) of \(\N\). 
It immediately follows that $\x$ contains $O(n_\N n_\T)$ binary variables.

We next describe the binary variables that are represented by $\x$ and
their encoding. More precisely, for \(0 \leq i \leq n_{\T}\)  and \( 0 \leq j < n_{\N} \), \( x_{i,j} = 1\) encodes that \( u_i \in V(\T) \cup \{u_{n_{\T}}\}\) is mapped to \( v_j \in V(\N)\) and, similarly, \( x_{i,j} = 0\) encodes that \( u_i \in V(\T) \cup \{u_{n_{\T}}\}\) is not mapped to \( v_j \in V(\N)\). Additionally, we introduce three types of slack variables. 
\begin{enumerate}
\item For each vertex \( u_i \in V(\T)\backslash\{u_0\}\), we have \(1 + \left\lfloor \lg(n_{\N}-n_{\T}\right)\rfloor \) slack variables that are denoted by \(y_{i,r}\) for \(0 \leq r \leq \left\lfloor \lg(n_{\N} -n_{\T})\right\rfloor \).

\item For each \(u_i \in V(\T)\), we have \(\alpha + \beta\) slack variable that are denoted by \(z_{i,j}\) for each index $j$ such that \(v_j \in V(\N)\) is either a tree or reticulation vertex. 
\item For each \(u_i \in V(\T)\) that is not a leaf, we have \(2\beta\) slack variables that are denoted by \(\hat{z}_{i,2j}\) and \(\hat{z}_{i,2j+1}\) for each index $j$ such that \(v_j \in V(\N)\) is a tree vertex.
\end{enumerate}

For the following Hamiltonian, we assume without loss of generality that \( |V(\N)| \geq |V(\T)|\). Indeed, if this is not the case, then \(\N\) does not display \(\T\). Let $v_j$ be a tree vertex of $\N$, and let \(v_{\childIndex{1}{j}}\) and \(v_{\childIndex{2}{j}}\) be the two children of  \(v_j\), where $j_1$  and $j_2$ are the indices of the children of \(v_j\). We use \(\child{1}{v_j}\) and \(\child{2}{v_j}\) to denote \(v_{\childIndex{1}{j}}\) and \(v_{\childIndex{2}{j}}\), respectively. Now, let $v_j$ be a reticulation vertex of $\N$. Similarly to the children of a tree vertex, let \(v_{\parentIndex{1}{j}}\) and \(v_{\parentIndex{2}{j}}\) be the two parents of  \(v_j\), where \(\parentIndex{1}{j}\) and \(\parentIndex{2}{j}\) are the indices of the two parents of \(v_j\). Again, we use \(\parent{1}{v_j}\) and \(\parent{2}{v_j}\) to denote \(v_{\parentIndex{1}{j}}\) and \(v_{\parentIndex{2}{j}}\), respectively. Furthermore, we define two functions \(f\) and \(g\) as follows.

\begin{align*}
    f(u_i,u_l) & = \begin{cases}
        1, \text{if there exists an edge from \(u_i\) to \(u_l\) in \(\T\)} \\
        0, \text{otherwise}
    \end{cases}\\ 
    g(v_j,v_k) & = \begin{cases}
        1, \text{if there exists an edge from \(v_j\) to \(v_k\) in \(\N\)} \\
        0, \text{otherwise}
    \end{cases} \\
\end{align*}

We are now in a position to define the Hamiltonian $H(\x)$, also sometimes called the objective function, as follows, where $A,B\in\mathbb{R}^+$. The choice of $A$ and $B$ is detailed in Section~\ref{sec:main-result}. However, we already note here that $B$ is sufficiently larger than $A$.
The coefficients of the terms $x_ix_j$ of the following binary objective
function $H(\x)$ correspond to the entries in the QUBO matrix $Q$, as
defined in Section~\ref{sec:QUBOdef}.

\begin{align*}
    H(\x) & =  B \cdot  \Biggr( \sum_{I = 1}^{10}   P_I(\x) \Biggl) +  A \cdot P_{11}(\x) + P_{12}(\x) 
    \shortintertext{where} \\
    P_1(\x) & = \Biggl( 1 - \sum_{j = 0}^{n_{\N}-1} x_{0,j} \Biggr)^2 + \sum_{i = 1}^{n_{\T}-1} \Biggl( 1 - \sum_{j = 0}^{n_{\N}-1} x_{i,j}  + \sum_{r = 0}^{\lfloor \lg(n_{\N} -n_{\T})\rfloor  } 2^r y_{i,r} \Biggr)^2 \\
    P_2(\x) & = \sum_{j = 0}^{n_{\N}-1} \Biggl( \sum_{i = 0}^{n_{\T}} x_{i,j} - 1  \Biggr)^2 \\
    P_3(\x) & = \sum_{i=0}^{n_{\T}-1}  \sum_{ \substack {v_j \text{ is a tree } \\ \text{ vertex of \(\N\) } }}^{} \Biggr( x_{i, \childIndex{1}{j}} x_{i, \childIndex{2}{j}} -2 x_{i, \childIndex{1}{j}} z_{i,j} -2 x_{i, \childIndex{2}{j}} z_{i,j} + 3 z_{i,j}  \Biggr) \\
    P_4(\x) & = \sum_{i=0}^{n_{\T}-1}  \sum_{ \substack {v_j \text{ is a tree } \\ \text{ vertex of \(\N\) } }}^{} x_{i,j} z_{i,j} \\
    P_5(\x) & = \sum_{i=0}^{n_{\T}-1}  \sum_{ \substack {v_j \text{ is a reticulation } \\ \text{ vertex of \(\N\) } }}^{} \Biggr( x_{i, \parentIndex{1}{j}}  x_{i, \parentIndex{2}{j}} -2 x_{i, \parentIndex{1}{j}} z_{i,j} -2 x_{i, \parentIndex{2}{j}} z_{i,j} + 3 z_{i,j} \Biggl) \\
    P_6(\x) & =  \sum_{i=0}^{n_{\T}-1}  \sum_{ \substack {v_j \text{ is a reticulation } \\ \text{ vertex of \(\N\) } }}^{} x_{i,j} z_{i,j} \\
    P_7(\x) & = \sum_{i = 0}^{n_\T-1} \sum_{\substack{l = 0 \\ l \neq i}}^{n_\T-1} \Biggl( f(u_i, u_l) \Biggl(  \sum_{ \substack {v_j \text{ is a tree } \\ \text{ vertex of \(\N\) } }}^{} x_{i,j} z_{l,j} \Biggr) \Biggr) \\
    P_8(\x) & = \sum_{ \substack {u_i \text{ is not a} \\ \text{ leaf of \(\T\) } } } \sum_{ \substack {v_j \text{ is a tree } \\ \text{ vertex of \(\N\) } }}^{} \Biggl( x_{i,j} x_{i, \childIndex{1}{j}} - 2 x_{i,j} \hat{z}_{i,2j} - 2 x_{i, \childIndex{1}{j}} \hat{z}_{i,2j} + 3 \hat{z}_{i,2j} 
    \mspace{150mu}
    \notag\\
    & \;\;\;\;\;\;\;\;\;\;\;\;\;\;\;\;\;\;\;\;\;\;\;\;\;\;\;\;\;\;\;\;\;\;\;
    + x_{i,j} x_{i, \childIndex{2}{j}} - 2 x_{i,j} \hat{z}_{i,2j+1} - 2 x_{i, \childIndex{2}{j}} \hat{z}_{i,2j+1} + 3 \hat{z}_{i,2j+1} \Biggr)   \\
    P_9(\x) & =  \sum_{i=0}^{n_{\T}-1}  \sum_{ \substack {l=0 \\ l \neq i}}^{n_{\T}-1} \Biggl( f(u_i, u_l) \Biggl(  \sum_{ \substack {v_j \text{ is a tree } \\ \text{ vertex of \(\N\) } }}^{} \Biggl( \hat{z}_{i,2j} x_{l, \childIndex{2}{j}}+ \hat{z}_{i, 2j+1} x_{l, \childIndex{1}{j}}\Biggr) \Biggr) \Biggr)  \\
       P_{10}(\x) & = \sum_{ \substack{ u_i \text{ is a } \\ \text{leaf of  } \T }}^{} \Biggl( 1 -  x_{i,j} \Biggr)^2  \text{\; , where \(j\) is the index of \(u_i\) in \(\N\)}\\
    P_{11}(\x) & = \sum_{i=0}^{n_{\T}-1}  \sum_{j= 0}^{n_{\N}-1} \Biggl( x_{i,j} \Biggl(1 - \sum_{\substack{k=0 \\ k \neq j }}^{n_{\N}-1} g(v_j, v_k) x_{i,k} \Biggr)   \Biggr)   - n_{\T}  \\
    P_{12}(\x) & = \sum_{i=0}^{n_{\T}-1}  \sum_{\substack{l=0 \\ l \neq i}}^{n_{\T}-1} \Biggl( f(u_i, u_l) \Biggl( 1 - \sum_{j= 0}^{n_{\N}-1} \sum_{ \substack{ k=0 \\ k \neq j}}^{n_{\N}-1} g(v_j, v_k)  x_{i,j} x_{l,k}  \Biggr) \Biggr)  \\
\end{align*} \\

For notational convenience, we use a map $d$ that maps a tuple
$(\x,u_i)$ to a subset of $V(\N)$. More precisely, we define \[
\D: (\blue{\mathbb{B}^m},
\allowbreak V(\T) \cup \{u_{n_\T}\})\rightarrow 2^{V(\N)} \] to decode
the subset of vertices \blue{of $\N$} such that \( \D( \x, u_i )= \{v_j\in V(\N) \mid x_{i,
j}  = 1\}\). We interpret this as the vertex \(u_i\) of $\T$ being mapped
to the subset  \(\{v_j\in V(\N) \mid x_{i, j}  = 1\}\) of $V(\N)$. We
also sometimes interpret \(\D(\x, u_i)\) as an induced subgraph of \(\N\) \blue{whose vertex set is} $ \D(\x, u_i)$
\blue{and whose edge set} is $\{(v_c, v_d) \in E(\N) \mid v_c,
v_d \in \D(\x, u_i)\}$.

We now return to the Hamiltonian $H(\x)$ as defined above and establish a lemma for each of $P_1(\x),P_2(\x),\ldots,P_{12}(\x)$. These lemmas provide some insight into different parts of $H(\x)$. Lemmas~\ref{P3}, \ref{P5}, and \ref{P8} follow from the proof of Lemma~\ref{lem: reduction technique}, whereas the proofs  of Lemmas~\ref{P1}, \ref{P2}, and \ref{P10} are straightforward and omitted.

The first penalty \blue{function} $P_1$ ensures that the root of \blue{$\T$} is mapped to \blue{exactly} one vertex of \blue{$\N$} and each other vertex of $\T$ is mapped to at least one vertex of $\N$.

\begin{lemma}\label{P1}
\(P_1(\x) = 0 \) if and only if, for each vertex \(u_i \in V(\T)\), we have \(|\D(\x,u_i)| > 0\) and \(|\D(\x, u_0)| = 1\).
\end{lemma}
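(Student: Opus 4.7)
The plan is to exploit the fact that $P_1(\x)$ is a sum of squares of real-valued expressions, so $P_1(\x) = 0$ if and only if each individual square vanishes. This immediately splits the iff into two componentwise statements, one for the root term and one indexed by $i \in \{1,\ldots,n_\T-1\}$.

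For the forward direction, I would argue that the first square $\bigl(1 - \sum_{j} x_{0,j}\bigr)^2 = 0$ together with $x_{0,j} \in \{0,1\}$ forces $\sum_{j} x_{0,j} = 1$, which by the definition of $\D$ gives $|\D(\x,u_0)| = 1$. For each $i \geq 1$, the vanishing of the corresponding square yields
\[
\sum_{j=0}^{n_\N-1} x_{i,j} \;=\; 1 + \sum_{r=0}^{\lfloor \lg(n_\N - n_\T)\rfloor} 2^r y_{i,r},
\]
and since the right-hand side is at least $1$, we obtain $|\D(\x,u_i)| \geq 1$, which is the desired conclusion.

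For the backward direction, assume $|\D(\x,u_0)| = 1$ and $|\D(\x,u_i)| \geq 1$ for all $i \in \{1,\ldots,n_\T-1\}$. The root square is then trivially zero. For each $i \geq 1$, set $k_i = |\D(\x,u_i)| - 1 \geq 0$ and choose the slack variables $y_{i,0},\ldots,y_{i,\lfloor \lg(n_\N-n_\T)\rfloor}$ to be the bits of the binary expansion of $k_i$; this makes the $i$-th square vanish.

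The one subtlety, and the step I expect to be the main obstacle, is justifying that $k_i$ actually fits in the allotted $1 + \lfloor \lg(n_\N - n_\T)\rfloor$ bits. The allotment admits any value in $\{0,1,\ldots,2^{\lfloor \lg(n_\N-n_\T)\rfloor+1}-1\}$, which contains $\{0,1,\ldots,n_\N-n_\T\}$, so I would argue that under the intended semantics $|\D(\x,u_i)| \leq n_\N - n_\T + 1$: the root consumes one vertex of $\N$, and each of the remaining $n_\T - 1$ vertices of $\T$ must occupy at least one vertex of $\N$, leaving at most $n_\N - n_\T + 1$ vertices available for any individual $u_i$. Hence $k_i \leq n_\N - n_\T$ and the binary encoding fits, completing the backward direction.
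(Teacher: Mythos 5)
The paper omits its own proof of this lemma as ``straightforward,'' so there is no argument to compare against; your sum-of-squares decomposition, together with the binary expansion of $k_i=|\D(\x,u_i)|-1$ into the slack variables $y_{i,r}$, is clearly the intended route, and your forward direction is complete and correct.

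The soft spot is exactly where you predicted it, but your patch does not hold up from the lemma's own hypotheses. Your bound $|\D(\x,u_i)|\le n_\N-n_\T+1$ rests on the sets $\D(\x,u_0),\ldots,\D(\x,u_{n_\T-1})$ being pairwise disjoint (``the root consumes one vertex, each other vertex of $\T$ consumes at least one''), but that disjointness is enforced by $P_2(\x)=0$, not by anything assumed in Lemma~\ref{P1}. As a literal biconditional over all $\x\in\mathbb{B}^m$ the backward direction is false: set $x_{0,0}=1$ and $x_{0,j}=0$ for $j\ge 1$, and $x_{i,j}=1$ for every $i\ge 1$ and every $j$. Then $|\D(\x,u_0)|=1$ and $|\D(\x,u_i)|=n_\N>0$ for all $i$, yet $\sum_j x_{i,j}=n_\N$ while the right-hand side $1+\sum_r 2^r y_{i,r}$ is at most $2^{\lfloor\lg(n_\N-n_\T)\rfloor+1}\le 2(n_\N-n_\T)$; for instance with $n_\T=7$ and $n_\N=9$ (the paper's own example sizes) the slack sum caps at $3$, so no assignment of the $y_{i,r}$ makes the $i$-th square vanish. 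This is partly a defect of the lemma statement itself --- the slack variables are implicitly existentially quantified and the lemma is only ever invoked alongside $P_2(\x)=0$ --- but your proof should either import that disjointness explicitly as a hypothesis or weaken the backward direction to assume $|\D(\x,u_i)|\le n_\N-n_\T+1$. You should also note the degenerate case $n_\N=n_\T$, where $\lg(n_\N-n_\T)$ is undefined and the slack sum must be read as empty; neither you nor the paper treats it.
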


Next we ensure that at most one vertex of \blue{$\T$} is mapped to each vertex of \blue{$\N$}.

\begin{lemma}\label{P2}
\(P_2(\x) = 0\) if and only if, for each vertex \(v_j \in V(\N)\), there exists exactly one vertex \(u_i\in V(\T)\cup\{u_{n_\T}\}\)  such that \( v_j \in \D(\x, u_i)\).
\end{lemma}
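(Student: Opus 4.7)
The plan is to exploit the sum-of-squares structure of $P_2(\x)$. Since each summand $\bigl(\sum_{i=0}^{n_\T} x_{i,j} - 1\bigr)^2$ is a square of an integer, the expression $P_2(\x)$ is non-negative and vanishes if and only if every summand vanishes. Hence $P_2(\x) = 0$ is equivalent to the condition that, for every $j \in \{0, 1, \ldots, n_\N - 1\}$,
\[
\sum_{i=0}^{n_\T} x_{i,j} = 1.
\]

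Next I would translate this arithmetic condition into the set-theoretic condition stated in the lemma. Since each $x_{i,j} \in \{0,1\}$, the left-hand sum simply counts the number of indices $i \in \{0, 1, \ldots, n_\T\}$ for which $x_{i,j} = 1$. By the definition of the decoding map $\D$, we have $v_j \in \D(\x, u_i)$ precisely when $x_{i,j} = 1$, so this count equals the number of vertices $u_i \in V(\T) \cup \{u_{n_\T}\}$ satisfying $v_j \in \D(\x, u_i)$. The count equals $1$ if and only if exactly one such $u_i$ exists. Quantifying over all $j$ yields the biconditional claimed in the lemma.

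There is no genuine obstacle here: both directions are immediate consequences of the fact that a sum of squares of integers is zero iff each term is zero, combined with the observation that for binary variables a sum equals $1$ iff exactly one summand is $1$. This is why the authors label the proof straightforward and omit it.
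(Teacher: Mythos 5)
Your proof is correct: the sum-of-squares argument forces each inner sum $\sum_{i=0}^{n_\T} x_{i,j}$ to equal $1$, and translating via the definition of $\D$ gives exactly the stated biconditional. The paper omits its proof of this lemma as ``straightforward,'' and your argument is precisely the intended one, so there is nothing to compare beyond noting full agreement.
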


The penalty \blue{function $P_3$} establishes \blue{equivalence between slack variables and the product of two non-slack variables}.

\begin{lemma}\label{P3}
\(P_3(\x) = 0\) if and only if \(z_{i,j} = x_{i, \childIndex{1}{j}} x_{i, \childIndex{2}{j}} \), where \(v_j \in V(\N)\) is a tree vertex and \(u_i \in V(\T)\). 
\end{lemma}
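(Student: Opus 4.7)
The plan is to reduce the statement to a termwise application of Lemma~\ref{lem: reduction technique}. Observe that $P_3(\x)$ is a double sum, indexed by $i \in \{0,1,\ldots,n_\T -1\}$ and by tree vertices $v_j$ of $\N$, whose summand has exactly the shape $x_1 x_2 - 2 x_1 y - 2 x_2 y + 3 y$ treated in Lemma~\ref{lem: reduction technique}, with the identifications $x_1 = x_{i, \childIndex{1}{j}}$, $x_2 = x_{i, \childIndex{2}{j}}$, and $y = z_{i,j}$. So the forward direction of the lemma will be immediate once I establish that the sum vanishes exactly when each summand vanishes, and the backward direction will follow by substitution, exactly as in the second half of the proof of Lemma~\ref{lem: reduction technique}.

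The first step is to strengthen the proof of Lemma~\ref{lem: reduction technique} slightly by recording that $P \geq 0$ for every binary assignment of $x_1, x_2, y$. I would verify this by inspection of the eight Boolean assignments (as is implicit in the four cases already enumerated in that proof): the only positive values arise for the unbalanced configurations where $y \neq x_1 x_2$, and those give $P \in \{1, 3\}$. This non-negativity is the key fact needed to convert "each summand is zero" into a consequence of "the total sum is zero."

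With non-negativity in hand, the main argument runs as follows. For the $(\Leftarrow)$ direction, assume $z_{i,j} = x_{i, \childIndex{1}{j}} x_{i, \childIndex{2}{j}}$ for every relevant pair $(i,j)$. Then by Lemma~\ref{lem: reduction technique}, each summand of $P_3(\x)$ equals $0$, so $P_3(\x) = 0$. For the $(\Rightarrow)$ direction, suppose $P_3(\x) = 0$. Since $P_3(\x)$ is a finite sum of non-negative terms, each summand must vanish. Applying the forward direction of Lemma~\ref{lem: reduction technique} to each vanishing summand yields $z_{i,j} = x_{i, \childIndex{1}{j}} x_{i, \childIndex{2}{j}}$ for every $u_i \in V(\T)$ and every tree vertex $v_j \in V(\N)$, as required.

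There is no real obstacle here: the lemma is essentially a bookkeeping corollary of Lemma~\ref{lem: reduction technique}. The only mildly subtle point to make explicit is the non-negativity of each quadratic penalty term, because without it the chain "sum is zero $\Rightarrow$ each term is zero" is not valid. Once that is stated, the rest is a direct invocation of the already-proved lemma, which justifies why the present lemma's proof can legitimately be described as following from Lemma~\ref{lem: reduction technique}.
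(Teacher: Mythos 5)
Your proof is correct and matches the paper's intended argument: the paper omits an explicit proof, stating only that Lemma~\ref{P3} ``follows from the proof of Lemma~\ref{lem: reduction technique},'' and your write-up supplies exactly the missing details, including the one genuinely necessary observation that each penalty summand is non-negative (taking values in $\{0,1,3\}$) so that a vanishing sum forces every summand to vanish.
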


We now require that \blue{no} vertex of $\T$ \blue{maps to a tree vertex in $\N$ and its two children}.

\begin{lemma}\label{P4}
Suppose that \(P_3(\x) = 0\). Then \(P_4(\x) = 0\) if and only if there does not exist a tree vertex \(v_j \in V(\N)\) and a vertex $u_i\in V(\T)$ such that $\{v_j,c_1(v_j),c_2(v_j)\}\subseteq d(\x,u_i)$.
\end{lemma}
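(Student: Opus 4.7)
The plan is to feed the characterization of the slack variables supplied by Lemma~\ref{P3} into the definition of $P_4(\x)$, turning $P_4$ into a sum of non-negative binary cubic terms, each of which flags one potential violation of the stated condition. Once in that form, the equivalence will follow immediately from the fact that sums of binary products are zero if and only if every summand is zero, together with the definition of the decoder $d$.

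More concretely, I would first invoke the assumption $P_3(\x) = 0$ and apply Lemma~\ref{P3} to conclude that, for every $u_i \in V(\T)$ and every tree vertex $v_j$ of $\N$, the slack variable satisfies $z_{i,j} = x_{i,\childIndex{1}{j}} \, x_{i,\childIndex{2}{j}}$. Substituting this identity into the definition of $P_4(\x)$ yields
\[
P_4(\x) \;=\; \sum_{i=0}^{n_{\T}-1} \;\sum_{\substack{v_j \text{ is a tree}\\ \text{vertex of } \N}} x_{i,j}\, x_{i,\childIndex{1}{j}}\, x_{i,\childIndex{2}{j}}.
\]
Each term in this sum is a product of binary variables and therefore lies in $\{0,1\}$, so the whole sum vanishes if and only if every individual term vanishes.

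The remaining step is to translate the vanishing of a single term back to a statement about $d(\x, u_i)$. The term $x_{i,j}\, x_{i,\childIndex{1}{j}}\, x_{i,\childIndex{2}{j}}$ equals $1$ exactly when all three factors are $1$, which by the definition of the decoding map $d$ is precisely the condition $\{v_j, \child{1}{v_j}, \child{2}{v_j}\} \subseteq \D(\x, u_i)$. Taking the contrapositive and quantifying over all pairs $(i,j)$ in the double sum delivers the claimed biconditional.

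I do not anticipate a genuine obstacle here; the argument is essentially a substitution followed by the standard ``sum of non-negative binary terms is zero iff each term is zero'' observation. The only care needed is to notice that the substitution from Lemma~\ref{P3} is valid simultaneously for every pair $(i,j)$ — which is exactly what $P_3(\x) = 0$ provides — so that $P_4(\x)$ can be rewritten globally as the cubic sum above before applying the term-by-term analysis.
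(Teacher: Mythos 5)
Your proposal is correct and matches the paper's own proof: both substitute the identity $z_{i,j}=x_{i,\childIndex{1}{j}}x_{i,\childIndex{2}{j}}$ from Lemma~\ref{P3} to rewrite $P_4(\x)$ as a sum of binary cubic terms, then observe that this non-negative sum vanishes iff each term does, which decodes to the stated condition on $d(\x,u_i)$. Your write-up simply makes the final term-by-term step slightly more explicit than the paper does.
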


\begin{proof}
    As \(P_3(\x) = 0\), it follows from Lemma~\ref{P3} that $z_{i,j}=x_{i,j_1}x_{i,j_2}$. Thus, \(P_4(\x) = 0\) if and only if 
    \begin{align}
        \label{eq: P3 and P4}
        \sum_{i=0}^{n_{\T}-1}  \sum_{ \substack {v_j \text{ is a tree } \\ \text{ vertex of \(\N\) } }}^{} x_{i,j} x_{i, \childIndex{1}{j}} x_{i, \childIndex{2}{j}} = 0 
    \end{align}
It now follows that Equation (\ref{eq: P3 and P4}) holds if and only if there exists no tree vertex $v_j\in V(\N)$ such that  \(\{v_j, \child{1}{v_j}, \child{2}{v_j}\} \subseteq \D(\x, u_i)\) for some \(u_i \in V(\T\)). 
\end{proof}

\blue{The penalty function $P_5$ is similar to $P_3$.}

\begin{lemma}\label{P5}
\(P_5(\x) = 0\) if and only if \(z_{i,j} = x_{i, \parentIndex{1}{j}} x_{i, \parentIndex{2}{j}} \), where \(v_j \in V(\N)\) is a reticulation vertex and \(u_i \in V(\T)\).
\end{lemma}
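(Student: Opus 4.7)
The plan is to observe that $P_5$ has the same syntactic structure as $P_3$, differing only in that the roles of children of a tree vertex are replaced by parents of a reticulation vertex. Hence the proof should mirror the (omitted) proof of Lemma~\ref{P3} and reduce to an application of Lemma~\ref{lem: reduction technique}.

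Concretely, I would first note that each summand of $P_5(\x)$ is exactly the expression $P = x_1 x_2 - 2 x_1 y - 2 x_2 y + 3 y$ from Lemma~\ref{lem: reduction technique} under the identification $x_1 = x_{i, \parentIndex{1}{j}}$, $x_2 = x_{i, \parentIndex{2}{j}}$, and $y = z_{i,j}$. By that lemma, this summand vanishes if and only if $z_{i,j} = x_{i, \parentIndex{1}{j}} x_{i, \parentIndex{2}{j}}$.

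Next, I would verify that each summand is non-negative on binary inputs by checking the four cases for $(x_1,x_2)\in\{0,1\}^2$ already carried out in the proof of Lemma~\ref{lem: reduction technique}: the summand takes the values $1-y$, $y$, $y$, and $3y$ respectively, all of which lie in $\{0,1,3\}$. Since $P_5(\x)$ is a sum of such non-negative terms, $P_5(\x) = 0$ if and only if every summand equals $0$, which by the previous paragraph is equivalent to $z_{i,j} = x_{i, \parentIndex{1}{j}} x_{i, \parentIndex{2}{j}}$ for every $u_i \in V(\T)$ and every reticulation vertex $v_j \in V(\N)$.

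There is no real obstacle here; the only mildly non-routine point is the non-negativity check that licenses the passage from ``sum equals zero'' to ``each summand equals zero,'' and this is immediate from the case analysis already present in Lemma~\ref{lem: reduction technique}. The proof can therefore be written compactly as a one-paragraph appeal to Lemma~\ref{lem: reduction technique} applied termwise, exactly as envisaged in the remark preceding Lemma~\ref{P1} that ``Lemmas~\ref{P3}, \ref{P5}, and \ref{P8} follow from the proof of Lemma~\ref{lem: reduction technique}.''
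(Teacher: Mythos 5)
Your proposal is correct and matches the paper's (omitted) argument: the paper simply states that Lemma~\ref{P5} follows from the proof of Lemma~\ref{lem: reduction technique}, which is exactly your termwise application of that lemma. Your explicit non-negativity check, which justifies passing from ``the sum is zero'' to ``each summand is zero,'' is the one detail the paper leaves implicit, and it is verified correctly.
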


We next require that \blue{no} vertex of $\T$ \blue{maps to a reticulation vertex of $\N$ and its two parents.}

\begin{lemma}\label{P6}
Suppose that \(P_5(\x) = 0\). Then \(P_6(\x) = 0\) if and only if there does not exist a reticulation vertex \(v_j \in V(\N)\) and a vertex $u_i\in V(\T)$ such that $\{v_j,p_1(v_j),p_2(v_j)\}\subseteq d(\x,u_i)$.
\end{lemma}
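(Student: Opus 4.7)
The plan is to mirror, almost verbatim, the argument already used for Lemma~\ref{P4}, replacing the tree-vertex / children structure by the reticulation-vertex / parents structure. The substitution tool is Lemma~\ref{P5}, which under the assumption $P_5(\x)=0$ identifies the slack variable $z_{i,j}$ attached to a reticulation vertex $v_j$ with the product $x_{i,\parentIndex{1}{j}}x_{i,\parentIndex{2}{j}}$.

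First I would invoke Lemma~\ref{P5} to rewrite
\[
P_6(\x)=\sum_{i=0}^{n_\T-1}\sum_{\substack{v_j\text{ is a reticulation}\\ \text{vertex of }\N}} x_{i,j}\,x_{i,\parentIndex{1}{j}}\,x_{i,\parentIndex{2}{j}},
\]
turning $P_6$ into a sum of binary cubic monomials. Because every summand is a product of binary variables, each term is either $0$ or $1$, so the entire sum is a nonnegative integer. Hence $P_6(\x)=0$ iff every individual term vanishes.

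Next I would translate the vanishing of a single term into the claimed set-theoretic condition. A term $x_{i,j}\,x_{i,\parentIndex{1}{j}}\,x_{i,\parentIndex{2}{j}}$ equals $1$ precisely when all three variables equal $1$, which, by the encoding given just before the definition of $H(\x)$ and by the definition of the decoder $\D$, is equivalent to $\{v_j, \parent{1}{v_j}, \parent{2}{v_j}\}\subseteq\D(\x,u_i)$. Thus every term is $0$ iff no such $(u_i,v_j)$ pair exists, which is exactly the statement of the lemma.

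The argument is entirely routine and presents no real obstacle; the only point deserving care is the direction of the substitution in Lemma~\ref{P5}, namely that $P_5(\x)=0$ is what licenses replacing $z_{i,j}$ by the quadratic product even though $z_{i,j}$ appears linearly in $P_6$. Once this substitution is justified, the equivalence reduces to the trivial observation that a sum of binary products is zero iff each product is zero, exactly as in Lemma~\ref{P4}.
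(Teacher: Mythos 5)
Your proposal is correct and follows essentially the same route as the paper: substitute $z_{i,j}=x_{i,\parentIndex{1}{j}}x_{i,\parentIndex{2}{j}}$ via Lemma~\ref{P5}, reduce $P_6$ to a sum of nonnegative binary cubic terms, and observe that the sum vanishes iff no term does, which is exactly the stated containment condition. The paper's proof is a terser version of the same argument (explicitly mirroring Lemma~\ref{P4}), so no further comment is needed.
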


\begin{proof}
    The proof is analogous to that of Lemma~\ref{P4}. As \(P_5(\x) = 0\), it follows from Lemma~\ref{P5} that $z_{i,j}=x_{i,j^1}x_{i,j^2}$. Thus, \(P_6(\x) = 0\) if and only if 
    \begin{align}
        \label{eq: P5 and P6}
        \sum_{i=0}^{n_{\T}-1}  \sum_{ \substack {v_j \text{ is a reticulation } \\ \text{ vertex of \(\N\) } }}^{} x_{i,j} x_{i, \parentIndex{1}{j}} x_{i, \parentIndex{2}{j}} = 0 
    \end{align}
It now follows that Equation (\ref{eq: P5 and P6}) holds if and only if there exists no reticulation vertex $v_j\in V(\N)$ such that \(\{v_j, \parent{1}{v_j}, \parent{2}{v_j}\} \subseteq \D(\x, u_i)\) for some \(u_i \in V(\T\)). 
\end{proof}

Furthermore, \blue{we ensure that there exists no edge $(u_i,u_l)$ in $\T$ such that $u_i$ maps to a tree vertex $v_j$ of $\N$ and $u_l$ maps to both children of $v_j$.}

\begin{lemma}\label{P7} 
Suppose that \(P_3(\x) = 0\). Then \(P_7(\x) = 0\) if and only if there does not exist a tree vertex \(v_j \in V(\N)\) and an edge $(u_i,u_l)\in E(\T)$ such that \(v_j \in \D(\x, u_i)\) and \(\{\child{1}{v_j}, \child{2}{v_j}\}\subseteq \D(\x, u_l)\).
\end{lemma}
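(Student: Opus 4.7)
The plan is to mirror the structure used in the proofs of Lemmas~\ref{P4} and~\ref{P6}, substituting the identity for the slack variables $z_{l,j}$ furnished by $P_3(\x)=0$ into $P_7(\x)$ so that $P_7(\x)$ becomes a sum of nonnegative cubic monomials in the $x$-variables. Since each monomial is a product of binary values, the whole sum vanishes if and only if every monomial vanishes, and each vanishing monomial translates directly into the stated nonexistence condition.

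First I would invoke Lemma~\ref{P3}: because $P_3(\x)=0$, we have $z_{l,j} = x_{l,j_1}x_{l,j_2}$ for every tree vertex $v_j\in V(\N)$ and every $u_l\in V(\T)$, where $j_1$ and $j_2$ are the indices of $c_1(v_j)$ and $c_2(v_j)$. Substituting this into the definition of $P_7$ turns it into
\begin{align*}
P_7(\x) = \sum_{i=0}^{n_\T-1}\sum_{\substack{l=0\\ l\neq i}}^{n_\T-1} f(u_i,u_l)\sum_{\substack{v_j\text{ is a tree}\\ \text{vertex of }\N}} x_{i,j}\,x_{l,j_1}\,x_{l,j_2}.
\end{align*}
Since $f(u_i,u_l)\in\{0,1\}$ and all $x$-variables are binary, every term in this triple sum is a nonnegative integer, so $P_7(\x)=0$ if and only if each term is zero.

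Next I would observe that $f(u_i,u_l)=1$ precisely when $(u_i,u_l)\in E(\T)$, so the surviving summands are exactly those indexed by edges $(u_i,u_l)\in E(\T)$ and tree vertices $v_j\in V(\N)$. Hence $P_7(\x)=0$ if and only if $x_{i,j}\,x_{l,j_1}\,x_{l,j_2}=0$ for every such pair $((u_i,u_l),v_j)$. A product of three binary variables is zero if and only if at least one of them is zero, so the condition becomes: for every edge $(u_i,u_l)\in E(\T)$ and every tree vertex $v_j\in V(\N)$, it is not the case that $x_{i,j}=x_{l,j_1}=x_{l,j_2}=1$. Recalling the decoding $\D(\x,u_i)=\{v_j : x_{i,j}=1\}$, this is exactly the statement that no tree vertex $v_j\in V(\N)$ and edge $(u_i,u_l)\in E(\T)$ satisfy $v_j\in\D(\x,u_i)$ and $\{c_1(v_j),c_2(v_j)\}\subseteq \D(\x,u_l)$, which yields both directions of the lemma.

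I do not anticipate a real obstacle: the work is bookkeeping rather than combinatorial. The only place requiring a bit of care is making sure the sum over $l\neq i$ is harmless, which it is because a tree edge $(u_i,u_l)$ automatically has $l\neq i$ (distinct endpoints), so restricting to $l\neq i$ does not exclude any relevant summands. Once the substitution from $P_3$ is in place, the equivalence is a direct translation between the vanishing of individual binary monomials and the set-membership conditions on $\D(\x,\cdot)$.
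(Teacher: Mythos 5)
Your proposal is correct and follows essentially the same route as the paper's proof: substitute the identity $z_{l,j}=x_{l,j_1}x_{l,j_2}$ from Lemma~\ref{P3} into $P_7$, observe that the result is a sum of nonnegative binary monomials, and read off the equivalence termwise. The paper states this more tersely (it simply asserts that the lemma follows from the resulting equation), whereas you spell out the nonnegativity argument and the harmlessness of the $l\neq i$ restriction, but there is no substantive difference.
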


\begin{proof}
    As \(P_3(\x) = 0\) it again follows from Lemma ~\ref{P3} that $z_{l,j}=x_{l,j_1}x_{l,j_2}$, where $l=i$. Hence, \( P_7(\x) = 0\) if and only if 
    \begin{align}
        \label{eq: P3 and P7}
        \sum_{i = 0}^{n_\T-1} \sum_{\substack{l = 0 \\ l \neq i}}^{n_\T-1} \Biggl( f(u_i, u_l) \Biggl(  \sum_{ \substack {v_j \text{ is a tree } \\ \text{ vertex of \(\N\) } }}^{} x_{i,j} x_{l, \childIndex{1}{j}}x_{l, \childIndex{2}{j}}\Biggr) \Biggr) = 0
    \end{align}
The lemma now follows from Equation~\ref{eq: P3 and P7}.
\end{proof}

The next penalty \blue{function} $P_8$ \blue{introduces additional slack variables to avoid binary cubic terms in $P_9$.}

\begin{lemma}\label{P8}
\(P_8(\x) = 0\) if and only if \(\hat{z}_{i,2j} = x_{i,j} x_{i, \childIndex{1}{j}}\) and \(\hat{z}_{i,2j+1} = x_{i,j} x_{i, \childIndex{2}{j}}\), where \(u_i \in V(\T)\blue{\setminus X}\) and \(v_j \in V(\N)\) is a tree vertex.
\end{lemma}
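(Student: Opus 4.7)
The plan is to view $P_8(\x)$ as a double sum, indexed by pairs $(u_i, v_j)$ where $u_i\in V(\T)\setminus X$ and $v_j$ is a tree vertex of $\N$, whose summand decomposes into two copies of the penalty expression from Lemma~\ref{lem: reduction technique}. Concretely, for each such pair the summand is
\begin{align*}
\bigl( x_{i,j} x_{i, j_1} - 2 x_{i,j} \hat{z}_{i,2j} - 2 x_{i, j_1} \hat{z}_{i,2j} + 3 \hat{z}_{i,2j}\bigr) + \bigl( x_{i,j} x_{i, j_2} - 2 x_{i,j} \hat{z}_{i,2j+1} - 2 x_{i, j_2} \hat{z}_{i,2j+1} + 3 \hat{z}_{i,2j+1}\bigr),
\end{align*}
and each parenthesised expression has the shape $P = a b - 2 a y - 2 b y + 3 y$ used in Lemma~\ref{lem: reduction technique}, with $(a,b,y)$ equal to $(x_{i,j}, x_{i,j_1}, \hat{z}_{i,2j})$ and $(x_{i,j}, x_{i,j_2}, \hat{z}_{i,2j+1})$, respectively.

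The key observation I would make explicit is that each such $P$ is in fact non-negative over $\mathbb{B}^3$. This is a quick four-case check already implicit in the proof of Lemma~\ref{lem: reduction technique}: if $a=b=0$ then $P = 3y\ge 0$; if exactly one of $a,b$ is $1$ then $P = y\ge 0$; and if $a=b=1$ then $P = 1 - y\ge 0$. Hence $P\ge 0$ with equality exactly when $y = ab$. This non-negativity is what lets me pass from the conclusion about individual summands to the full sum.

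Given non-negativity of every term, $P_8(\x) = 0$ if and only if every summand equals $0$, which in turn, by Lemma~\ref{lem: reduction technique} applied to each parenthesised expression separately, happens if and only if $\hat{z}_{i,2j} = x_{i,j}x_{i,j_1}$ and $\hat{z}_{i,2j+1} = x_{i,j}x_{i,j_2}$ for every non-leaf $u_i\in V(\T)$ and every tree vertex $v_j\in V(\N)$. This is precisely the claim of the lemma.

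The main (and essentially only) obstacle is making sure the non-negativity of the penalty $P$ is stated cleanly, since Lemma~\ref{lem: reduction technique} as quoted gives only the zero-set characterisation and not the non-negativity; without the latter, one cannot conclude that a vanishing sum forces each summand to vanish. Beyond that observation, the proof is a mechanical reduction to two instances of Lemma~\ref{lem: reduction technique} per summand, exactly parallel to how Lemmas~\ref{P3} and~\ref{P5} were handled.
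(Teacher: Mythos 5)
Your proof is correct and takes essentially the same approach as the paper: the paper gives no explicit proof of this lemma, stating only that it ``follows from the proof of Lemma~\ref{lem: reduction technique},'' and your argument is precisely the fleshed-out version of that claim. Your explicit verification that each penalty term $ab - 2ay - 2by + 3y$ is non-negative over $\mathbb{B}^3$ (so that a vanishing sum forces every summand to vanish) is exactly the point the paper leaves implicit, and it checks out.
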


\blue{The next lemma shows that two vertices that are incident with a given edge in $\T$ are not mapped to two distinct vertices in $\N$ that have a common parent.} 

\begin{lemma}\label{P9}
Suppose that \(P_8(\x) = 0\). Then \(P_9(\x) = 0\) if and only if, there does not exist a tree vertex \(v_j \in V(\N)\) and an edge $(u_i,u_l)\in E(\T)$ such that \(\{v_j, \child{1}{v_j}\}\subseteq \D(\x, u_i)\) and \(\child{2}{v_j} \in \D(\x, u_l)\), or \(\{v_j, \child{2}{v_j}\}\subseteq \D(\x, u_i)\) and \(\child{1}{v_j} \in \D(\x, u_l)\).
\end{lemma}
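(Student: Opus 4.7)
The plan is to mirror the structure of the proofs of Lemmas~\ref{P4}, \ref{P6}, and~\ref{P7}: use the hypothesis $P_8(\x)=0$ to substitute for the slack variables $\hat{z}_{i,2j}$ and $\hat{z}_{i,2j+1}$ inside $P_9(\x)$, and then argue term-by-term using the nonnegativity of products of binary variables.

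First, I would invoke Lemma~\ref{P8}: since $P_8(\x)=0$, we have $\hat{z}_{i,2j} = x_{i,j}x_{i,\childIndex{1}{j}}$ and $\hat{z}_{i,2j+1}=x_{i,j}x_{i,\childIndex{2}{j}}$ for every index $i$ with $u_i\in V(\T)\setminus X$ and every index $j$ with $v_j$ a tree vertex of $\N$. Substituting these identities into the definition of $P_9(\x)$ yields
\begin{align*}
P_9(\x) = \sum_{i=0}^{n_\T-1}\sum_{\substack{l=0\\ l\neq i}}^{n_\T-1} f(u_i,u_l)\sum_{\substack{v_j \text{ is a tree}\\ \text{vertex of }\N}} \Bigl( x_{i,j}\,x_{i,\childIndex{1}{j}}\,x_{l,\childIndex{2}{j}} + x_{i,j}\,x_{i,\childIndex{2}{j}}\,x_{l,\childIndex{1}{j}} \Bigr).
\end{align*}
(For indices $i$ with $u_i$ a leaf of $\T$, the factor $f(u_i,u_l)$ is zero whenever $(u_i,u_l)$ is not an edge in $\T$, and $u_i$ has no out-neighbour in $\T$, so those terms vanish and the substitution from Lemma~\ref{P8} is only needed when $u_i$ is an internal vertex, matching its hypothesis.)

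Next, I would observe that every summand is a product of binary variables multiplied by the $\{0,1\}$-valued coefficient $f(u_i,u_l)$, hence nonnegative. Therefore the full sum vanishes if and only if each individual summand vanishes. The summand $f(u_i,u_l)\,x_{i,j}x_{i,\childIndex{1}{j}}x_{l,\childIndex{2}{j}}$ equals $1$ precisely when $(u_i,u_l)\in E(\T)$ and $\{v_j,\child{1}{v_j}\}\subseteq \D(\x,u_i)$ and $\child{2}{v_j}\in \D(\x,u_l)$; symmetrically for the other summand. Thus $P_9(\x)=0$ if and only if no such configuration arises, which is exactly the condition stated in the lemma.

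I do not expect any real obstacle here: once Lemma~\ref{P8} is applied, the argument reduces to inspecting a sum of nonnegative monomials and reading off the zero-locus. The only care needed is to be explicit that the summands with $u_i$ a leaf contribute nothing (so the substitution from Lemma~\ref{P8}, whose hypothesis restricts $u_i$ to non-leaves, is sufficient), and to note that the two terms correspond to the two symmetric cases (with respect to swapping $\child{1}{v_j}$ and $\child{2}{v_j}$) appearing in the lemma statement.
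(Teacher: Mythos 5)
Your proposal is correct and follows essentially the same route as the paper: apply Lemma~\ref{P8} to replace the slack variables $\hat{z}_{i,2j}$ and $\hat{z}_{i,2j+1}$ by the products $x_{i,j}x_{i,\childIndex{1}{j}}$ and $x_{i,j}x_{i,\childIndex{2}{j}}$, reducing $P_9(\x)=0$ to the vanishing of a sum of nonnegative binary monomials, from which the stated condition is read off directly. Your extra remark that the terms with $u_i$ a leaf contribute nothing (since $f(u_i,u_l)=0$ for such $i$) is a useful clarification that the paper leaves implicit, but it does not change the argument.
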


\begin{proof}
    Since \(P_8(\x) = 0\), it follows from Lemma~\ref{P8} that \( P_9(\x) = 0\) if and only if 
    \begin{align}
        \label{eq: P8 and P9}
        \sum_{i=0}^{n_{\T}-1}  \sum_{ \substack {l=0 \\ l \neq i}}^{n_{\T}-1} \Biggl( f(u_i, u_l) \Biggl(  \sum_{ \substack {v_j \text{ is a tree } \\ \text{ vertex of \(\N\) } }}^{} \Biggl( x_{i,j} x_{i, \childIndex{1}{j}} x_{l, \childIndex{2}{j}}+ x_{i,j} x_{i, \childIndex{2}{j}} x_{l, \childIndex{1}{j}}\Biggr) \Biggr) \Biggr) = 0
    \end{align}
    The lemma now follows from Equation~\ref{eq: P8 and P9}.
\end{proof}

The penalty \blue{function} $P_{10}$ ensures that the leaf labels $X$ match in both $\T$ and $\N$.

\begin{lemma}\label{P10}
 \(P_{10}(\x) = 0\) if and only if, for each leaf $u_i$ of $\T$, there exists a leaf $v_j$ in $\N$ such that $u_i$ and $v_j$ have the same label and \( v_j \in \D(\x, u_i) \).
\end{lemma}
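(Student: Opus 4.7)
The plan is to observe that $P_{10}(\x)$ is a sum of squares of real numbers, hence a sum of non-negative terms, so $P_{10}(\x) = 0$ holds if and only if every summand vanishes. Concretely, I would first unfold the sum: for each leaf $u_i$ of $\T$, the definition specifies that $j$ is the index in $\N$ of the (unique) leaf with the same taxon label as $u_i$, and the contribution to $P_{10}$ is $(1 - x_{i,j})^2 \geq 0$. Summing non-negative contributions, the total is zero exactly when each $(1 - x_{i,j})^2 = 0$, i.e., when $x_{i,j} = 1$ for every leaf $u_i$ of $\T$ paired with its matching leaf $v_j$ of $\N$.

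From here I would invoke the encoding convention introduced earlier: $x_{i,j} = 1$ means precisely that $u_i$ is mapped to $v_j$, equivalently $v_j \in \D(\x, u_i)$. Combining this with the fact that $j$ is the index of the label-matching leaf of $\N$, the equivalence of the lemma follows immediately in both directions. The only subtlety worth flagging is that the notation ``$j$ is the index of $u_i$ in $\N$'' in the statement of $P_{10}$ implicitly relies on the fact that, since both $\T$ and $\N$ are leaf-labeled by the same set $X$, each leaf of $\T$ has exactly one label-corresponding leaf in $\N$; once that bijection on $X$ is acknowledged, no genuine obstacle remains and the proof is a single-line unpacking of the definition of $P_{10}$ together with the encoding of $\x$ through $\D$. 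This is presumably why the authors classify the proof as straightforward and omit it.
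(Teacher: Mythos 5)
Your proof is correct and is precisely the straightforward unpacking that the paper has in mind when it omits the proof of this lemma: each summand $(1-x_{i,j})^2$ is non-negative, so the sum vanishes iff $x_{i,j}=1$ for every leaf $u_i$ and its unique label-matching leaf $v_j$ of $\N$, which by the encoding is exactly $v_j\in \D(\x,u_i)$. Your remark about the implicit bijection between the leaves of $\T$ and $\N$ via the common label set $X$ is a fair and accurate observation.
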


We now restrict the mapping of \blue{each} vertex of $\T$ to induce a \blue{directed} path in $\N$.

\begin{lemma}\label{P11}
Suppose that  \(P_I(\x) = 0\) for each \( 3 \leq I \leq 6\). Then \(P_{11}(\x) = 0\) if and only if, for each vertex \(u_i \in V(\T)\), \(\D(\x, u_i)\) is a directed path of \(\N\).
\end{lemma}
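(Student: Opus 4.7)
The plan is to expand and reinterpret $P_{11}(\x)$ as a vertex-minus-edge count on the subgraphs induced by each $\D(\x,u_i)$. Distributing the outer factor in the definition of $P_{11}(\x)$ yields, for fixed $i$, the quantity $|\D(\x,u_i)|$ minus the number of directed edges of $\N$ whose endpoints both lie in $\D(\x,u_i)$. Writing $G_i=(V_i,E_i)$ for this induced subgraph, the identity
\[
P_{11}(\x) \;=\; \sum_{i=0}^{n_\T-1} \bigl(|V_i|-|E_i|\bigr) - n_\T
\]
is immediate.

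The structural step is to show, using Lemmas~\ref{P4} and~\ref{P6} (applicable because $P_3(\x)=P_4(\x)=P_5(\x)=P_6(\x)=0$), that every vertex of $G_i$ has both in-degree and out-degree at most $1$. For out-degree: leaves of $\N$ have out-degree $0$; reticulation vertices have $\N$-out-degree $1$; and for a tree vertex $v_j\in V_i$, Lemma~\ref{P4} forbids $\{\child{1}{v_j},\child{2}{v_j}\}\subseteq \D(\x,u_i)$, so at most one child of $v_j$ lies in $V_i$. The in-degree bound is symmetric: the root has $\N$-in-degree $0$, tree vertices and leaves have $\N$-in-degree $1$, and for a reticulation vertex $v_j\in V_i$, Lemma~\ref{P6} forbids both parents from simultaneously lying in $\D(\x,u_i)$. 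Since $\N$ is acyclic, $G_i$ is then a DAG with maximum in- and out-degree $1$, hence a disjoint union of directed paths. Writing $p_i$ for the number of such path components, we have $|E_i|=|V_i|-p_i$, so $P_{11}(\x)=\sum_{i=0}^{n_\T-1} p_i - n_\T$.

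Both directions of the biconditional then follow. If each $\D(\x,u_i)$ is a directed path of $\N$, then $p_i=1$ for every $i$, the sum equals $n_\T$, and $P_{11}(\x)=0$. Conversely, if $P_{11}(\x)=0$ then $\sum_i p_i = n_\T$; since every non-empty $\D(\x,u_i)$ satisfies $p_i\geq 1$, this equality forces exactly one path component per index, i.e.\ $\D(\x,u_i)$ is a single directed path.

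The main obstacle is the degree-bounding step that certifies $G_i$ is a disjoint union of paths; once that is established, the identity $|V|-|E|=(\text{number of components})$ for such a graph reduces the biconditional to a counting statement. A minor subtlety in the converse direction is the possibility that some $\D(\x,u_i)$ is empty and is compensated by a $p_l\geq 2$ at another index. This edge case does not arise when $P_{11}$ is combined with $P_1$ in the overall Hamiltonian (which forbids empty images), and otherwise one may interpret the empty graph as a degenerate directed path.
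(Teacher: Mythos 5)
Your proof is correct and follows essentially the same route as the paper's: both use Lemmas~\ref{P4} and~\ref{P6} (under $P_3=\cdots=P_6=0$) to show each induced subgraph $\D(\x,u_i)$ is a disjoint union of directed paths and then reduce $P_{11}(\x)=0$ to a component count, the paper counting vertices with no child in $\D(\x,u_i)$, which under the degree bounds equals your $|V_i|-|E_i|$. Your explicit flagging of the empty-image edge case is in fact more careful than the paper's own argument, which asserts that every vertex of $V(\T)$ contributes a penalty of at least $1$ and thereby silently assumes $|\D(\x,u_i)|>0$ (i.e.\ $P_1(\x)=0$, which does hold wherever the lemma is invoked).
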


\begin{proof}
    We first notice that \(P_{11}(\x)\) adds a penalty if and only if, for two vertices \(u_i \in V(\T)\) and \(v_j \in V(\N)\), we have $v_j\in\D(\x,u_i)$ but no child of $v_j$ in $\N$ is contained in \(\D(\x,u_i)\). Since there is no directed cycle in \(\N\), \(P_{11}(\x)\) adds at least a penalty of  1 for each vertex in \(V(\T)\). 
    
    (\(\implies\)) Suppose that \(P_{11}(\x) = 0\). Towards a contradiction, assume that there exists a vertex $u_a\in V(\T)$ such that \(\D(\x, u_a)\) does not form a directed path in \(\N\). This implies that there exists a vertex \(v \in \D(\x, u_a)\) such that $v$ has two parents that are both contained in \(\D(\x, u_a)\), $v$ has two children that are both contained in \(\D(\x, u_a)\), or \(\D(\x,u_a)\) is disconnected. Since $P_3(\x) = P_4(\x) =P_5(\x) = P_6(\x)=0$, it follows from Lemmas~\ref{P3}--\ref{P6}, that each vertex in \(\D(\x, u_a)\) has at most one child in $\N$ that is contained in \(\D(\x, u_a)\) and at most one parent in $\N$ that is contained in \(\D(\x, u_a)\). Thus, \(\D(\x,u_a)\) is disconnected, and there exist \(v_c, v_d \in \D(\x, u_a)\) such that no vertex of $\N$ that is a child of  \(v_c\) or \(v_d\) is contained in  \(\D(\x, u_a)\). Then $$\sum_{\substack{k=0 \\ k \neq c }}^{n_{\N}-1}  g(v_c, v_k)x_{a,k} = 0 \text{ and }\sum_{\substack{k=0 \\ k \neq d }}^{n_{\N}-1} g(v_d, v_k) x_{a,k} = 0.$$ 
Therefore, \(u_a\) adds a penalty of at least 2 and every  vertex in \(V(\T)\backslash\{u_a\}\) adds a penalty of at least 1. As we subtract \(n_\T\) in $P_{11}(\x)$ it follows that \(P_{11}(\x) > 0\); a contradiction. Hence, \blue{if $P_{11}(\x)=0$, then} \(\D(\x,u_i)\) is a directed path of $\N$ for each $u_i \in V(\T)$. 
    
   ( \( \impliedby\)) Let $u_i\in V(\T)$. Suppose that \(\D(\x, u_i)\) is a directed path of \(\N\). As $\N$ is acyclic, there exists exactly one vertex \(v \in \D(\x,u_i)\) such that no child of \(v\) in $\N$ is contained in \(\D(\x,u_i)\). This implies that \(u_i\) adds a penalty of 1. In total, we have \(n_\T\) vertices in \(\T\) and, so a total penalty of \(n_\T\). Since we subtract \(n_\T\) in $P_{11}(\x)$ it follows that \(P_{11}(\x) = 0\). \blue{The lemma now follows.}
\end{proof}

Finally we restrict the two \blue{directed} paths in $\N$ that correspond to two \blue{adjacent} vertices of $\T$ \blue{to be separated by} exactly one edge in $\N$.  That is, \blue{the subgraph of $\N$ that is induced collectively by all vertices in $\T$} is a tree.

\begin{lemma}\label{P12}
Suppose that \(P_I(\x) = 0\) for each \(3 \le I \le 9\) and $P_{11}(\x) = 0$. Then \(P_{12}(\x) = 0\) if and only if, for each edge \((u_i,u_l) \in E(\T)\), there exists exactly one edge \((v_c, v_d) \in E(\N)\) such that \(v_c \in\D(\x, u_i)\) and \(v_d \in \D(\x, u_l)\).
\end{lemma}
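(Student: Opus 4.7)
The plan is to rewrite $P_{12}(\x)$ in a transparent counting form and then show that its summands are nonnegative under the given hypotheses. For each edge $(u_i,u_l)\in E(\T)$, I would let $N_{i,l}$ denote the number of edges $(v_c,v_d)\in E(\N)$ with $v_c\in\D(\x,u_i)$ and $v_d\in\D(\x,u_l)$, so that the definition of $P_{12}$ rearranges as $P_{12}(\x)=\sum_{(u_i,u_l)\in E(\T)}(1-N_{i,l})$. The backward direction is then immediate: if each $N_{i,l}=1$, every summand vanishes.

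For the forward direction, the plan is to show that $N_{i,l}\leq 1$ for every $(u_i,u_l)\in E(\T)$; once this holds, each summand lies in $\{0,1\}$, so $P_{12}(\x)=0$ forces $N_{i,l}=1$ for every edge of $\T$, as required. To establish $N_{i,l}\leq 1$, I would argue by contradiction: suppose two distinct edges $(v_a,v_b)$ and $(v_c,v_d)$ both run from $\D(\x,u_i)$ to $\D(\x,u_l)$. If $v_a=v_c$, then $v_a$ is a tree vertex both of whose children lie in $\D(\x,u_l)$, contradicting Lemma~\ref{P7}. Otherwise $v_a\neq v_c$, and since Lemma~\ref{P11} guarantees that $\D(\x,u_i)$ is a directed path in $\N$, I may assume without loss of generality that $v_a$ precedes $v_c$ in that path. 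Then $v_a$ has a successor $w$ in this directed path which is necessarily a child of $v_a$ in $\N$; by Lemma~\ref{P2}, $v_b\in\D(\x,u_l)$ is not in $\D(\x,u_i)$, so $w\neq v_b$. Hence $v_a$ has two distinct children and is therefore a tree vertex, and the configuration $\{v_a,w\}\subseteq\D(\x,u_i)$ together with $v_b\in\D(\x,u_l)$ contradicts Lemma~\ref{P9}, disposing of both the subcase $v_b=v_d$ and the fully-disjoint subcase simultaneously.

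The hard part is the subcase $v_b=v_d$, i.e., two edges ending at a common reticulation vertex, which is not immediately precluded by Lemma~\ref{P6} since $v_b$ itself sits in a different image $\D(\x,u_l)$ from its parents. The trick will be to combine the directed-path structure from Lemma~\ref{P11} with Lemma~\ref{P9}: whenever two parents of a reticulation vertex both lie in the same directed path $\D(\x,u_i)$, the earlier parent in that path must branch (its other child must also lie in $\D(\x,u_i)$ to continue the path), producing exactly the branching pattern that Lemma~\ref{P9} forbids across an edge of $\T$.
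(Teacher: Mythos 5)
Your proof is correct and follows essentially the same route as the paper's: a case split on whether the two offending edges share their tail vertex, with Lemma~\ref{P7} disposing of the shared-tail case and Lemma~\ref{P9} together with the directed-path structure from Lemma~\ref{P11} disposing of the rest (your final paragraph on the common-reticulation-head subcase is already covered by this argument, since only the tails matter). The one caveat is your appeal to Lemma~\ref{P2} to conclude $w\neq v_b$: the hypothesis $P_2(\x)=0$ is not among the stated assumptions of Lemma~\ref{P12}, so strictly you are using more than is granted --- although the paper's own proof silently assumes the same disjointness when it asserts without justification that the earlier vertex on the path is a tree vertex, and $P_2(\x)=0$ does hold wherever the lemma is actually applied.
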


\begin{proof}
    Let $(u_i,u_l)\in E(\T)$. We start by noticing that \(P_{12}(\x)\) adds a penalty if and only if there does not exist an edge from a vertex in \(\D(\x, u_i)\) to a vertex in \(\D(\x, u_l)\) in \(\N\). 
    
     (\(\implies\)) Suppose that \(P_{12}(\x) = 0\). 
    Towards a contradiction, assume that there exist at least two edges \((v_c, v_d), (v_{c'},v_{d'}) \in E(\N)\) such that $\{v_c,v_{c'}\}\subseteq \D(\x,u_i)$ and $\{v_d,v_{d'}\}\subseteq \D(\x,u_l)$. By Lemma~\ref{P11}, \(\D(\x,u_i)\) and \(\D(\x,u_l)\) are two directed paths of \(\N\). We consider two cases.  
    
\noindent \textbf{Case 1.} Assume that $v_c\ne v_{c'}$. Then, without loss of generality, we may assume that $v_{c'}$ precedes $v_c$ on the directed path $\D(\x,u_i)$. Hence, $v_{c'}$ is a tree vertex of $\N$. Furthermore, we have \(\{v_{c'}, \child{a}{v_{c'}}\}\subseteq  \D(\x,u_i)\) and \(\child{b}{v_{c'}} \in \D(\x,u_l)\), where \( \{a,b\}=\{1,2\}\). This setup is shown in Figure \ref{fig: tree vertex between two chain graphs}. As \(P_8(\x)=0\), it follows by Lemma~\ref{P9} that  \(P_9(\x)>0\); a contradiction. 
    \begin{figure}[h] 
        \centering
        \includegraphics[width=6cm]{./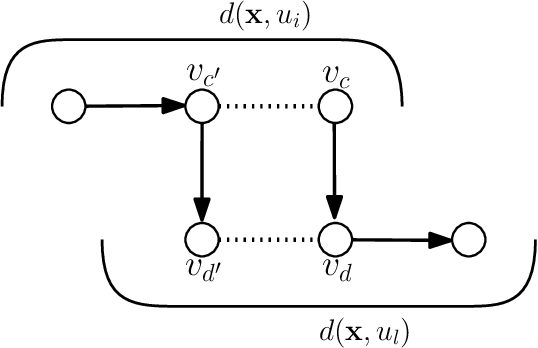}
        \caption{Setup as described in Case 1 of the proof of Lemma~\ref{P12}. The vertices of the top directed path and the bottom directed path represent \(\D(\x, u_i)\) and \(\D(\x, u_l)\), respectively. The tree vertex $v_{c'}$  has one child in $d(\x,u_i)$ and the other child in $d(\x,u_l)$.}
        \label{fig: tree vertex between two chain graphs}
    \end{figure}
    
\noindent \textbf{Case 2.} Assume that $v_c= v_{c'}$. It follows that $v_c$ is a tree vertex of $\N$, $v_c\in d(\x,u_i)$, and \(\{\child{1}{v_c}, \child{2}{v_c}\}\subseteq \D(\x, u_l)\). This setup is shown in Figure~\ref{fig: tree head vertex has two edges}. Now, as \(P_3(\x) = 0\), Lemma~\ref{P7} implies that \(P_7(\x) > 0\); again a contradiction. 
    \begin{figure}[h] 
        \centering
        \includegraphics[width=7.3cm]{./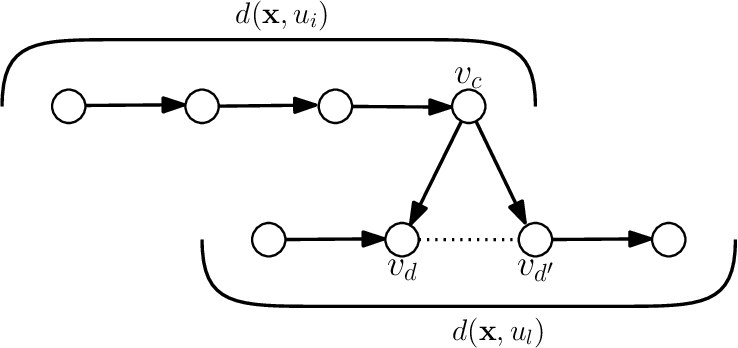}
        \caption{Setup as described in Case 2 of the proof of Lemma~\ref{P12}. The vertices of the top directed path and the bottom directed path represent  \(\D(\x, u_i)\) and \(\D(\x, u_l)\), respectively. The tree vertex $v_c$ has both children in $d(\x,u_l)$.} 
        \label{fig: tree head vertex has two edges}
    \end{figure} 
    
    By combining both cases, there exists at most one edge from a vertex in \(\D(\x,u_i)\) to a vertex in \(\D(\x,u_l)\) in \(\N\). Thus $$\sum_{j= 0}^{n_{\N}-1} \sum_{ \substack{ k=0 \\ k \neq j}}^{n_{\N}-1} g(v_j, v_k) x_{i,j} x_{l,k}   \leq 1.$$ Moreover, as \(P_{12}(\x) = 0\), we have $$\sum_{j= 0}^{n_{\N}-1} \sum_{ \substack{ k=0 \\ k \neq j}}^{n_{\N}-1} g(v_j, v_k) x_{i,j} x_{l,k}   = 1.$$ Hence, for \((u_i,u_l)\) there exists exactly one edge in $\N$ that has one endpoint in \(\D(\x, u_i)\) and the other endpoint in \( \D(\x, u_l)\).

    (\(\impliedby\)) Suppose that, for each edge \((u_i,u_l) \in E(\T)\), there exists exactly one edge \((v_c, v_d) \in E(\N)\) such that \(v_c \in\D(\x, u_a)\) and \(v_d \in \D(\x, u_b)\). Then, $$\sum_{j= 0}^{n_{\N}-1} \sum_{ \substack{ k=0 \\ k \neq j}}^{n_{\N}-1} g(v_j, v_k) x_{i,j} x_{l,k}   = 1$$ in $P_{12}(\x)$ and, so,  \(P_{12}(\x) = 0\). \\
\end{proof}

The next corollary follows from Lemmas~\ref{P9} and \ref{P12}.

\begin{corollary}\label{cor:connect}
Suppose that \(P_I(\x) = 0\) for all \( I \in \{1,2,\ldots,9,11,12\}\). Let \((u_i, u_l) \in E(\T)\), then  there  exists an edge from the terminal vertex of the directed path induced by  \(\D(\x, u_i)\) to a vertex of the directed path induced by \(\D(\x, u_l)\) in \(\N\).
\end{corollary}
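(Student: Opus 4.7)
The plan is to chain together Lemmas \ref{P11}, \ref{P12}, and \ref{P9} in order to pin down precisely where the unique bridging edge between $\D(\x,u_i)$ and $\D(\x,u_l)$ must emanate. First I would apply Lemma \ref{P11} to conclude that, because $P_{11}(\x)=0$ (and $P_3,P_4,P_5,P_6$ vanish), both $\D(\x,u_i)$ and $\D(\x,u_l)$ are directed paths of $\N$. Then Lemma \ref{P12} supplies the existence (and uniqueness) of an edge $(v_c,v_d)\in E(\N)$ with $v_c\in\D(\x,u_i)$ and $v_d\in\D(\x,u_l)$. The remaining goal is to identify $v_c$ as the terminal vertex (the sink) of the directed path $\D(\x,u_i)$.

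I would argue by contradiction. Suppose $v_c$ is not the terminal vertex of $\D(\x,u_i)$. Since $\D(\x,u_i)$ is a directed path, $v_c$ then has a successor $v_{c''}$ on this path, and this successor must be a child of $v_c$ in $\N$. Together with the edge $(v_c,v_d)$ given by Lemma \ref{P12}, the vertex $v_c$ now has out-degree at least two in $\N$, so by definition $v_c$ is a tree vertex. Because a tree vertex of the binary network $\N$ has exactly two children, we can conclude $\{v_{c''},v_d\}=\{\child{1}{v_c},\child{2}{v_c}\}$. In other words, $v_c\in\D(\x,u_i)$, one of its children lies in $\D(\x,u_i)$, and the other lies in $\D(\x,u_l)$.

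This configuration, however, is exactly the one forbidden by Lemma \ref{P9} under the assumptions $P_8(\x)=P_9(\x)=0$: we have $\{v_c,\child{a}{v_c}\}\subseteq\D(\x,u_i)$ and $\child{b}{v_c}\in\D(\x,u_l)$ with $\{a,b\}=\{1,2\}$. The contradiction forces $v_c$ to be the terminal vertex of $\D(\x,u_i)$, which is the statement of the corollary.

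The argument is essentially a bookkeeping exercise, so I expect no genuine obstacle. The one place to be careful is to justify that the head of the bridging edge really is the \emph{other} child of $v_c$: this uses the binary structure of $\N$ (tree vertices have out-degree exactly two) together with the fact that $v_c$ already contributes one child to its own path $\D(\x,u_i)$, leaving only a single remaining child that the edge to $\D(\x,u_l)$ can hit. Once that identification is in place, Lemma \ref{P9} closes the case immediately.
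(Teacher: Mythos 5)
Your argument is correct and is essentially the paper's intended proof: the paper merely asserts that the corollary ``follows from Lemmas~\ref{P9} and~\ref{P12}'', and your chain (Lemma~\ref{P11} to get the two directed paths, Lemma~\ref{P12} for the unique bridging edge $(v_c,v_d)$, Lemma~\ref{P9} to rule out $v_c$ having a successor inside $\D(\x,u_i)$) fills in exactly that. The one step you leave implicit is that $v_d$ is genuinely the \emph{other} child of $v_c$, i.e.\ $v_d\neq v_{c''}$; this needs the disjointness of $\D(\x,u_i)$ and $\D(\x,u_l)$, which is supplied by $P_2(\x)=0$ via Lemma~\ref{P2} and is available among your hypotheses.
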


\subsection{Proof of Correctness} \label{sec:main-result}

In this section, we show that the Hamiltonian $H(\x)$ as presented in Section~\ref{sec: QUBO Formulation} correctly encodes instances of Tree Containment. We start by detailing the choice of constants $A$ and $B$ in the definition of $H(\x)$.

\begin{lemma} \label{lemma: min value of P12}
    Let \(\x \in \mathbb{B}^m\). If $P_I(\x) = 0$ for all $0 \le I \le 10$ then $P_{12}(\x) > -2n_{\N}$.
\end{lemma}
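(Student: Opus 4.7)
My plan is to exploit the fact that $P_{12}(\x)$ is essentially a sum over the edges of $\T$ that counts how many directed edges of $\N$ ``bridge'' the images of the two endpoints. Because $f(u_i,u_l)=1$ only when $(u_i,u_l)\in E(\T)$, the outer double sum collapses to a sum over the $n_\T-1$ edges of $\T$. For each such edge the inner contribution equals $1-N_{i,l}(\x)$, where
\[
N_{i,l}(\x) \;=\; \sum_{j=0}^{n_\N-1}\sum_{\substack{k=0\\k\ne j}}^{n_\N-1} g(v_j,v_k)\,x_{i,j}\,x_{l,k}
\]
is exactly the number of directed edges of $\N$ going from a vertex of $\D(\x,u_i)$ to a vertex of $\D(\x,u_l)$. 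Hence
\[
P_{12}(\x) \;=\; (n_\T-1)\;-\;\sum_{(u_i,u_l)\in E(\T)} N_{i,l}(\x),
\]
so it suffices to upper-bound the right-most sum.

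The hypothesis enters at the next step. The only ingredient I actually need from $P_1(\x)=\cdots=P_{10}(\x)=0$ is Lemma~\ref{P2}, which tells me that the sets $\D(\x,u_i)$ with $u_i\in V(\T)\cup\{u_{n_\T}\}$ form a partition of $V(\N)$. Consequently, every edge $(v_j,v_k)\in E(\N)$ has its tail in a unique $\D(\x,u_i)$ and its head in a unique $\D(\x,u_l)$, so it contributes to at most one summand $N_{i,l}(\x)$, and contributes at all only when $(u_i,u_l)\in E(\T)$. Summing over all edges of $\N$ therefore yields
\[
\sum_{(u_i,u_l)\in E(\T)} N_{i,l}(\x) \;\le\; |E(\N)|.
\]

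It remains to bound $|E(\N)|$. A direct degree count for a rooted binary phylogenetic network on $X$ (each of the $\beta$ tree vertices, including the root, contributes $2$ to the total out-degree, each of the $\alpha$ reticulations contributes $1$, and each leaf contributes $0$), combined with $n_\N=\alpha+\beta+|X|$, gives $|E(\N)|=2n_\N-\alpha-2|X|\le 2n_\N-2$ since $|X|\ge 1$. Combining everything,
\[
P_{12}(\x) \;\ge\; (n_\T-1)-|E(\N)| \;\ge\; (n_\T-1)-(2n_\N-2) \;=\; n_\T+1-2n_\N \;>\; -2n_\N,
\]
which is the desired bound. I do not anticipate any substantive obstacle; the main points to be careful about are handling the auxiliary symbol $u_{n_\T}$ correctly inside the partition argument and ensuring that the final inequality is strict (secured by $|X|\ge 1$, with the degenerate case $n_\T=1$ giving $P_{12}(\x)=0>-2n_\N$ directly because $\T$ then has no edges).
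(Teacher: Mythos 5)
Your proof is correct, but it takes a genuinely different route from the paper's. The paper argues locally: for each non-leaf vertex $u_i$ of $\T$ it considers the subgraph of $\N$ induced by $\D(\x,u_i)$, invokes Lemmas~\ref{P4} and~\ref{P6} to see that its components are directed paths, invokes Lemmas~\ref{P7} and~\ref{P9} to bound the number of edges leaving each component toward the image of each child by one, and then sums the component counts to at most $n_\N$. You instead argue globally: you rewrite $P_{12}(\x)=(n_\T-1)-\sum_{(u_i,u_l)\in E(\T)}N_{i,l}(\x)$, use only $P_2(\x)=0$ (Lemma~\ref{P2}) to see that the sets $\D(\x,u_i)$ partition $V(\N)$ so that each edge of $\N$ is counted at most once across all the $N_{i,l}$, and then bound $|E(\N)|\le 2n_\N-2$ by a degree count. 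Your argument uses a strictly weaker hypothesis (only $P_2(\x)=0$, rather than the structural consequences of $P_3,\dots,P_9$), is more elementary, and yields the slightly sharper explicit bound $P_{12}(\x)\ge n_\T+1-2n_\N$; the paper's version is more tightly coupled to the machinery it has already built and foreshadows how those lemmas are used again in the proof of Lemma~\ref{P12}, but buys no quantitative advantage here. Your handling of the edge cases ($u_{n_\T}$ absorbing unmatched tail or head vertices, and $n_\T=1$ giving $P_{12}(\x)=0$) is also sound.
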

\begin{proof}
    Let $u_i$ be a vertex of $\T$ that is not a leaf, and let $(u_i, u_l),
    (u_i, u_h) \in E(\T)$. Consider the subgraph $G_i$ of $\N$ that is
    induced by $d(\x, u_i)$. It follows from Lemmas~\ref{P4} and~\ref{P6}
    that each connected component of this subgraph is a directed path. Let
    $c_i$ be the number of connected components of $G_i$.  Then, both sums
    $$\sum_{j= 0}^{n_{\N}-1} \sum_{ \substack{ k=0 \\ k \neq j}}^{n_{\N}-1} g(v_j, v_k)
    x_{i,j} x_{l,k} \text{ and } \sum_{j= 0}^{n_{\N}-1} \sum_{ \substack{
    k=0 \\ k \neq j}}^{n_{\N}-1} g(v_j, v_k)  x_{i,j} x_{h,k}$$
    in $P_{12}(\x)$ are each at most $c_i$ because, by Lemmas~\ref{P7}
    and~\ref{P9}, for each connected component there exists at most one
    edge from a vertex of this component to a vertex in $d(\x, u_l)$ and at
    most one edge from a vertex of this component to a vertex in $d(\x, u_h)$.
    Moreover, summing over all non-leaf vertices of $\T$ we have $$\sum_{ \substack {u_i \text{ is not a } \\ \text{ leaf of \(\T\) } }}c_i\leq n_\N$$ \blue{and hence}
    \begin{align*}
        \centering
        & \sum_{i=0}^{n_{\T}-1}  \sum_{\substack{l=0 \\ l \neq i}}^{n_{\T}-1} \Biggl( f(u_i, u_l) \Biggl( - \sum_{j= 0}^{n_{\N}-1} \sum_{ \substack{ k=0 \\ k \neq j}}^{n_{\N}-1} g(v_j, v_k)  x_{i,j} x_{l,k}  \Biggr) \Biggr) \ge -2n_{\N} \\
        \implies P_{12}(\x) =& \sum_{i=0}^{n_{\T}-1}  \sum_{\substack{l=0 \\ l \neq i}}^{n_{\T}-1} \Biggl( f(u_i, u_l) \Biggl( 1 - \sum_{j= 0}^{n_{\N}-1} \sum_{ \substack{ k=0 \\ k \neq j}}^{n_{\N}-1} g(v_j, v_k)  x_{i,j} x_{l,k}  \Biggr) \Biggr) > -2n_{\N}
    \end{align*}\end{proof}

Now, with the last lemma in mind, throughout the remainder of this section,
let \(A = 2n_{\N}\), and let \cyan{$$ B = 4n_{\N}^2 n_{\T}^2 > 2 \cdot \min\{A \cdot
\red{(}- P_{11}(\x)\red{)}, - P_{12}(\x) \} \;\; \allowbreak \text{ for each } \x \in
\mathbb{B}^{m}.$$ }

We next establish three lemmas.

\begin{lemma} \label{lemma: P5 implies H}
    Let \(\x \in \mathbb{B}^m\). If \(H(\x) = 0\) then \(P_I(\x) = 0\) for each \( 1 \leq I \leq 10\).
\end{lemma}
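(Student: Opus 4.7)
The plan is to proceed by contradiction: assume $H(\x) = 0$ but some $P_I(\x) > 0$ with $1 \le I \le 10$, and deduce that $H(\x) > 0$, which is impossible.

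First, I would verify that each $P_I(\x)$ for $1 \le I \le 10$ is a non-negative integer. The penalties $P_1$, $P_2$, and $P_{10}$ are sums of squares of integer-valued quantities. The penalties $P_3$, $P_5$, and $P_8$ consist of summands of the form appearing in Lemma~\ref{lem: reduction technique}, and the case analysis in that lemma's proof shows each such summand lies in $\{0,1,3\}$ for binary inputs. Finally, $P_4$, $P_6$, $P_7$, and $P_9$ are sums with non-negative coefficients of products of binary variables weighted by the $\{0,1\}$-valued indicators $f$ and $g$. By integrality, if any $P_I(\x) > 0$ for some $1 \le I \le 10$, then $\sum_{I=1}^{10} P_I(\x) \ge 1$, and hence $B \cdot \sum_{I=1}^{10} P_I(\x) \ge B$.

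Second, I would bound $|A \cdot P_{11}(\x) + P_{12}(\x)|$ uniformly over $\x \in \mathbb{B}^m$ and show it is strictly less than $B$. Each summand of $P_{11}(\x) + n_\T$ has the form $x_{i,j}\bigl(1 - \sum_k g(v_j,v_k) x_{i,k}\bigr)$, which lies in $\{-1,0,1\}$ because $v_j$ has at most two out-neighbours in $\N$; summing over the $n_\T \cdot n_\N$ pairs $(i,j)$ yields $|P_{11}(\x)| \le n_\T n_\N + n_\T$. Similarly, each of the at most $n_\T - 1$ nonzero summands of $P_{12}(\x)$ has absolute value at most $|E(\N)| + 1 = O(n_\N)$, so $|P_{12}(\x)| = O(n_\T n_\N)$. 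Substituting $A = 2n_\N$ gives $|A \cdot P_{11}(\x) + P_{12}(\x)| = O(n_\N^2 n_\T) < B = 4 n_\N^2 n_\T^2$ for all relevant instance sizes (with the few degenerate small cases checked directly). Combining with the previous paragraph,
\[
H(\x) \ge B \cdot \sum_{I=1}^{10} P_I(\x) - |A \cdot P_{11}(\x) + P_{12}(\x)| \ge B - |A \cdot P_{11}(\x) + P_{12}(\x)| > 0,
\]
contradicting $H(\x) = 0$.

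The main obstacle I anticipate is producing the uniform bound on $|A \cdot P_{11}(\x) + P_{12}(\x)|$ that is valid for every $\x \in \mathbb{B}^m$, and not only for those $\x$ that already satisfy $P_I(\x) = 0$ for $I \le 10$. In particular, the sharper bound $P_{12}(\x) > -2n_\N$ obtained in Lemma~\ref{lemma: min value of P12} is unavailable here, because its hypothesis is precisely what we are trying to establish. The crude term-by-term bound above still suffices thanks to the quadratic slack built into the choice $B = 4 n_\N^2 n_\T^2$ relative to the $O(n_\N^2 n_\T)$ magnitude of the lower-weighted penalties.
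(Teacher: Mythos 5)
Your proposal is correct and takes essentially the same approach as the paper: assume some $P_I(\x)>0$ with $1\leq I\leq 10$ and conclude $H(\x)>0$, a contradiction. You additionally supply the quantitative details the paper's one-line argument leaves implicit, namely the integrality and non-negativity of each $P_I$ (so the $B$-weighted sum is at least $B$) and a uniform bound $|A\cdot P_{11}(\x)+P_{12}(\x)|<B$ valid for all $\x\in\mathbb{B}^m$ rather than only for vectors already satisfying the other constraints.
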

\begin{proof}
    Suppose that \(H(\x)\) = 0. By the definition of $H(\x)$, we have \(P_I(\x) \geq 0\) for all \( 1 \leq I \leq 10\). Now assume that \(P_I(\x) > 0\) for some \(1 \leq I \leq 10\). Then $$H(\x)=B\cdot  \Biggr(\sum_{I= 1}^{10}   P_I(\x)\Biggl) + A \cdot P_{11}(\x) + P_{12}(\x) > 0;$$ a contradiction and the lemma follows.
\end{proof}

\begin{lemma} \label{P11=0}
   Let \(\x \in \mathbb{B}^m\). If \(H(\x) = 0\) then \(P_{11}(\x) = 0\).
\end{lemma}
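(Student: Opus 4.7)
The plan is to combine Lemma~\ref{lemma: P5 implies H} with the bound on $P_{12}$ from Lemma~\ref{lemma: min value of P12}, exploiting the fact that the constant $A = 2n_{\N}$ has been chosen precisely to separate a positive integral value of $P_{11}$ from the most negative value $P_{12}$ can take once $P_1,\ldots,P_{10}$ vanish.

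First I would invoke Lemma~\ref{lemma: P5 implies H}: since $H(\x) = 0$, we get $P_I(\x)=0$ for all $1 \le I \le 10$. Substituting into the definition of $H(\x)$, this collapses the Hamiltonian to
\begin{align*}
    A\cdot P_{11}(\x) + P_{12}(\x) = 0,
\end{align*}
so it suffices to show that $P_{11}(\x)$ cannot be strictly positive. The next step is to observe that, under the hypotheses $P_1(\x)=P_3(\x)=P_4(\x)=P_5(\x)=P_6(\x)=0$, the quantity $P_{11}(\x)$ is a non-negative integer. Indeed, $P_3(\x)=P_4(\x)=0$ (with Lemmas~\ref{P3}--\ref{P4}) force every tree vertex of $\N$ contained in $d(\x,u_i)$ to have at most one child in $d(\x,u_i)$, and analogously $P_5(\x)=P_6(\x)=0$ (with Lemmas~\ref{P5}--\ref{P6}) handle reticulation vertices, so each inner term $x_{i,j}\bigl(1-\sum_{k\ne j}g(v_j,v_k)x_{i,k}\bigr)$ lies in $\{0,1\}$. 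Moreover, $P_1(\x)=0$ and the acyclicity of $\N$ guarantee that each $d(\x,u_i)$ contains at least one vertex with no child in $d(\x,u_i)$, contributing at least $1$ per $u_i\in V(\T)$; subtracting $n_\T$ then yields $P_{11}(\x)\ge 0$. This is essentially the reasoning already used in the proof of Lemma~\ref{P11}, so I would simply cite that argument rather than reproduce it.

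The final step is a one-line contradiction. Suppose $P_{11}(\x)>0$; since $P_{11}(\x)\in\mathbb{Z}_{\ge 0}$ we in fact have $P_{11}(\x)\ge 1$, whence
\begin{align*}
    A\cdot P_{11}(\x)\ \ge\ A\ =\ 2n_{\N}.
\end{align*}
Combined with $A\cdot P_{11}(\x)+P_{12}(\x)=0$, this forces $P_{12}(\x)\le -2n_{\N}$, contradicting Lemma~\ref{lemma: min value of P12}, which asserts $P_{12}(\x)>-2n_{\N}$ whenever $P_I(\x)=0$ for $0\le I\le 10$. Hence $P_{11}(\x)=0$, completing the proof.

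I expect no serious obstacle here: the argument is essentially a pigeonhole on magnitudes, and the delicate part (the integrality and non-negativity of $P_{11}$) was already carried out inside Lemma~\ref{P11}. The only point to state carefully is that the hypotheses needed for $P_{11}(\x)\ge 0$ are precisely those supplied by $P_1=\dots=P_{10}=0$, so that the chain $\text{Lemma~\ref{lemma: P5 implies H}} \Rightarrow P_{11}\ge 0 \Rightarrow$ contradiction with Lemma~\ref{lemma: min value of P12} goes through cleanly.
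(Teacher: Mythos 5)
Your proposal is correct and follows essentially the same route as the paper: invoke Lemma~\ref{lemma: P5 implies H} to kill $P_1,\ldots,P_{10}$, rule out $P_{11}(\x)<0$ via the arguments underlying Lemmas~\ref{P1}, \ref{P4}, and \ref{P6} (the paper phrases this as a two-case contradiction rather than directly asserting $P_{11}(\x)\ge 0$, but the content is identical), and rule out $P_{11}(\x)>0$ by combining $A=2n_{\N}$ with the bound $P_{12}(\x)>-2n_{\N}$ from Lemma~\ref{lemma: min value of P12}. Your explicit appeal to the integrality of $P_{11}$ to get $P_{11}(\x)\ge 1$ is a point the paper leaves implicit, so if anything your write-up is slightly more careful.
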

\begin{proof}
Suppose that \(H(\x)\) = 0. By Lemma \ref{lemma: P5 implies H}, it follows that \(P_I(\x) = 0\) for each \(1 \leq I \leq 10\). First, assume that \(P_{11}(\x) < 0\). There are two cases to consider.
     
\noindent \textbf{Case 1.} There exists a vertex $u_i\in V(\T)$ such that \(|\D(\x, u_i)| = 0\). Then, \(P_1(\x) > 0\); a contradiction.

\noindent \textbf{Case 2.} There  exist a vertex \(u_i \in V(\T)\) and a vertex \(v_j \in V(\N)\) with \(x_{i,j} = 1\) such that $$\sum_{\substack{k=0 \\ k \neq j }}^{n_{\N} -1} g(v_j, v_k)x_{i,k}  \allowbreak \geq 2$$ in $P_{11}(\x).$ Hence, there exist two edges \((v_j, v_a), (v_j, v_b) \in E(\N) \) such that  \(\{ v_j, v_a, v_b\}\subseteq \D(x, u_i)\). As, \(P_3(\x) = 0\), this implies that  \(P_4(\x) > 0\); another contradiction. 

Second, assume that \(P_{11}(\x) > 0\). 
By Lemma \ref{lemma: min value of P12}, $P_{12}(\x) > -2n_{\N}$. Now, as \(P_{11}(\x) > 0\) and \(A\cdot P_{11}(\x) + P_{12}(\x) > 0\), with \(A = 2n_{\N}\), it follows that \(P_{11}(\x) > 0\) implies that \(H(\x) > 0\); a final contradiction. This establishes the lemma.
\end{proof}

\begin{lemma}\label{P12=0}
   Let \(\x \in \mathbb{B}^m\). If \(H(\x) = 0\) then \(P_{12}(\x) = 0\).
\end{lemma}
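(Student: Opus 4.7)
The plan is to observe that this lemma is essentially a direct corollary of the two preceding lemmas combined with the definition of $H(\x)$. Unlike Lemmas \ref{P11=0} and \ref{lemma: P5 implies H}, where the argument required exploiting the size gap between $B$ and $A$ (and between $A$ and the possible negativity of $P_{12}$) to rule out cancellations, here all the potentially negative contributions to $H(\x)$ have already been pinned down to $0$ by prior results.

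Concretely, first I would invoke Lemma \ref{lemma: P5 implies H} on the hypothesis $H(\x)=0$ to conclude that $P_I(\x)=0$ for every $1 \le I \le 10$. Next I would invoke Lemma \ref{P11=0}, again using $H(\x)=0$, to obtain $P_{11}(\x)=0$. At this point every summand in the definition
\[
H(\x) = B\cdot\Biggl(\sum_{I=1}^{10} P_I(\x)\Biggr) + A\cdot P_{11}(\x) + P_{12}(\x)
\]
except possibly $P_{12}(\x)$ has been shown to vanish. Rearranging gives $P_{12}(\x)=H(\x)=0$, as required.

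The only point that needs a brief sanity check is that this rearrangement is legitimate even though $P_{12}(\x)$ is not a priori nonnegative (indeed, Lemma \ref{lemma: min value of P12} only bounds it below by $-2n_\N$); but no positivity is needed here, since we are simply solving a linear equation for $P_{12}(\x)$ after substituting the known zero values. There is no real obstacle to overcome, and the proof is a short two-line deduction.
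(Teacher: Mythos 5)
Your proposal is correct and follows exactly the same route as the paper's own proof: apply Lemma~\ref{lemma: P5 implies H} and Lemma~\ref{P11=0} to get $P_I(\x)=0$ for $1\le I\le 11$, then read off $P_{12}(\x)=H(\x)=0$ from the definition of $H$. Your added remark that no sign information on $P_{12}$ is needed is a sensible clarification but does not change the argument.
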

\begin{proof}
   Suppose that \(H(\x)\) = 0. It follows from Lemmas~\ref{lemma: P5 implies H} and~\ref{P11=0}  that  \(P_I(\x) = 0\) for each \(1 \leq I \leq 11\). Hence, if \(H(\x) = 0\), then \(P_{12}(\x) = 0\).
\end{proof}

For the next theorem, we need a new definition. Let $\T$ be a leaf-labeled rooted tree, and let $X$ be the leaf set of $\T$. For a vertex $u$ of $\T$, we use $C_\T(u)$ to denote the subset of $X$ that precisely contains each element of $X$ that is a descendants of $u$. Note that, if $u$ is a leaf of $\T$ with label $x$, then $C_\T(u)=\{x\}$. Moreover, if $\T$ is a phylogenetic tree, then it immediately follows that $C_\T(u)\ne C_\T(u')$ for two distinct vertices of $\T$. We are now in a position to establish the main result of this section.

\begin{theorem}
Let $\N$ be a phylogenetic network on $X$, and let $\T$ be a phylogenetic $X$-tree. Then, for each $\x \in \mathbb{B}^m$,
    \(H(\x) = 0\) if and only if \(\N\) displays \(\T\).
\end{theorem}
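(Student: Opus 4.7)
The plan is to prove the two directions of the biconditional separately. For the forward direction, assume that $H(\x) = 0$. Lemmas~\ref{lemma: P5 implies H}, \ref{P11=0}, and \ref{P12=0} immediately yield $P_I(\x) = 0$ for every $I \in \{1, 2, \ldots, 12\}$. These can then be fed into the structural lemmas: by Lemma~\ref{P11}, each $d(\x, u_i)$ for $u_i \in V(\T)$ is a non-empty directed path of $\N$; by Lemmas~\ref{P1} and~\ref{P2}, these paths together with $d(\x, u_{n_\T})$ partition $V(\N)$; by Lemma~\ref{P10}, the leaves of $\T$ are mapped to their namesake leaves of $\N$; and by Corollary~\ref{cor:connect}, for each edge $(u_i, u_l) \in E(\T)$ there is exactly one edge of $\N$ from the terminal vertex of $d(\x, u_i)$ to a vertex of $d(\x, u_l)$. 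Taking the union of all these paths together with the $|E(\T)|$ connecting edges produces a subgraph $\N'$ of $\N$ whose suppression of all in-degree-1, out-degree-1 vertices is isomorphic to $\T$, certifying that $\N$ displays $\T$.

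For the backward direction, suppose $\N$ displays $\T$. By definition there is a subgraph $\N' \subseteq \N$ whose degree-2 vertex suppression is isomorphic to $\T$; let $\phi : V(\T) \to V(\N')$ denote the induced injection. Define $d(\x, u_0) = \{\phi(u_0)\}$, and for each non-root $u_l \in V(\T)$ with parent $u_i$ in $\T$ let $d(\x, u_l)$ be the vertex set of the directed $\N'$-path from $\phi(u_i)$ to $\phi(u_l)$ with $\phi(u_i)$ removed; assign the remaining vertices $V(\N) \setminus V(\N')$ to $u_{n_\T}$. The slack variables are then forced: $y_{i,r}$ binary-encodes $|d(\x, u_i)| - 1$, which fits in $1 + \lfloor \lg(n_\N - n_\T) \rfloor$ bits, so as to zero out $P_1$; while $z_{i,j}$ and $\hat{z}_{i,j}$ are defined by the equalities in Lemmas~\ref{P3}, \ref{P5}, and \ref{P8}.

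With this assignment, $P_I(\x) = 0$ is verified case by case. Cases $P_1, P_2, P_{10}$ follow directly from the construction, and cases $P_3, P_5, P_8$ follow from the slack-variable definitions. Cases $P_4$ and $P_6$ follow because each $d(\x, u_i)$ is a directed path, so no vertex of $\N$ can have both children (or both parents) lying in a single $d(\x, u_i)$. For $P_7, P_9, P_{12}$, the key observation is that any tree vertex $v_j$ lying in some $d(\x, u_i)$ is either $v_j = \phi(u_i)$, in which case the two children of $v_j$ in $\N$ land in the two $d(\x, u_l)$'s corresponding to the two $\T$-children of $u_i$ or in $d(\x, u_{n_\T})$, or $v_j$ is an internal (suppressed) vertex of the $\N'$-path inside $d(\x, u_i)$, in which case one child of $v_j$ continues the path and the other, being outside $\N'$, lies in $d(\x, u_{n_\T})$; in either case the forbidden configurations are excluded. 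Finally, $P_{11}$ vanishes because each of the $n_\T$ directed paths contributes exactly one dead-end vertex to the inner sum, exactly cancelling the $-n_\T$ offset.

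The main obstacle is the case analysis in the backward direction, particularly verifying $P_7, P_9$, and $P_{12}$: this hinges on the fact that the ``deleted'' children of tree vertices encountered along the displayed paths must be absorbed into $d(\x, u_{n_\T})$, and tracking this partition of $V(\N)$ carefully is essential. The forward direction, by contrast, is largely mechanical once the structural lemmas are invoked, though one must also confirm that suppressing the degree-2 vertices of the constructed $\N'$ produces a graph \emph{isomorphic} (not merely homeomorphic) to $\T$, which follows from the edge-to-edge correspondence supplied by Lemma~\ref{P12}.
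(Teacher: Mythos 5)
Your forward direction has a gap at the final assembly step. Corollary~\ref{cor:connect} only guarantees an edge from the terminal vertex of the path induced by $d(\x,u_i)$ to \emph{some} vertex of the path induced by $d(\x,u_l)$ --- not to its first vertex. If that connecting edge $e=(p_k,p'_j)$ enters the child's path $p'_1,\ldots,p'_{k'}$ at an index $j>1$, then the union of all the paths together with the connecting edges contains the prefix $p'_1,\ldots,p'_{j-1}$ as a spurious branch: $p'_j$ acquires in-degree~$2$ and $p'_1$ has in-degree~$0$ in your $\N'$, so suppressing in-degree-$1$/out-degree-$1$ vertices does not produce a graph isomorphic to $\T$. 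Nothing in the penalty functions forces $j=1$. The paper resolves exactly this point by identifying and deleting these ``dangling'' prefix vertices before asserting that the result is a subdivision of $\T$; your proposal needs the same trimming step.

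The more serious gap is in the backward direction. You take an \emph{arbitrary} embedding $\N'$ and argue that $P_4, P_6, P_7, P_9, P_{11}, P_{12}$ vanish because each $d(\x,u_i)$ is a directed path and because the unused child of a suppressed tree vertex lies ``outside $\N'$''. Both claims can fail: the penalties are evaluated against the adjacencies of $\N$, not of $\N'$, and $\N$ may contain chords among the vertices retained in $\N'$. For example, if $v_j$ is a tree vertex with children $a$ and $b$, where $b$ is a reticulation with parents $v_j$ and $a$, then the embedded path may traverse $v_j\to a\to b$ inside a single $d(\x,u_i)$, placing $\{v_j,c_1(v_j),c_2(v_j)\}$ and $\{b,p_1(b),p_2(b)\}$ in one $d(\x,u_i)$ and making $P_4,P_6>0$ (and spoiling $P_{11}$). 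Likewise, the unused child of a suppressed vertex need only have its incident \emph{edge} deleted; the child itself may survive as a vertex of $\T'$ lying in the adjacent $d(\x,u_l)$, which yields a second edge of $\N$ between consecutive paths and breaks the ``exactly one edge'' requirement behind $P_7$, $P_9$, and $P_{12}$. The paper closes this hole by choosing, among all valid embeddings, one that maximizes the number $|V|$ of deleted vertices, and then deriving a contradiction with that maximality whenever such a chord or extra edge exists. Some normalization of this kind is indispensable, and it is the key idea missing from your verification.
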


\begin{proof}
(\(\implies\)) Suppose that \(H(\x) = 0\). Then, by Lemmas~\ref{lemma: P5 implies H}--\ref{P12=0}, it follows that \(P_I(\x) = 0\) with \( 1 \leq I \leq 12\). We  start by deleting edges and vertices in $\N$ as described by the following 2-step process.
\begin{enumerate}[(1)]	
\item Delete each vertex in \(\D(\x, u_{n_\T})\) and each edge that is incident with at least one vertex in \(\D(\x, u_{n_\T})\) in $\N$. Let $\N_1$ be the resulting graph.
\item For each ordered pair \((u_a, u_b)\) of vertices in \(V( \T) \) such that \((u_a, u_b) \notin E(\T)\), delete each edge  from a vertex in \(d(\x, u_a)\) to a vertex in  \(\D(\x, u_b)\) in \(\N_1\).  Let $\N_2$ be the resulting graph.
\end{enumerate}
\blue{Recall that,} by Lemma~\ref{P1}, $|d(\x,u_0)|=1$, where $u_0$ is the root of $\T$ and, by Lemma~\ref{P10}, for each leaf $u_i$ of $\T$, $d(\x,u_i)$ contains the leaf of $\N$ that has the same label as $u_i$. We next obtain a graph $\N_3$ from $\N_2$ such that $\N_3$ is a subdivision of $\T$. For each $u_l\in V(\T)\setminus\{u_0\}$, it follows from Lemma~\ref{P11}, that $d(\x,u_l)$ is a directed path of $\N$ and therefore, by construction, also of $\N_2$. Let $u_i$ be the unique ancestor of $u_l$ in $\T$. Let $p=p_1,p_2,\ldots,p_k$ be the directed path of $\N$ that is induced by $d(\x,u_i)$ and, similarly, let $p'=p'_1,p'_2,\ldots,p'_{k'}$ be the directed path of $\N$ that is induced by $d(\x,u_l)$. By Lemma~\ref{P12} and Corollary~\ref{cor:connect}, there exists exactly one edge $e$ in $\N$ that joins a vertex of  $p$ to a vertex of $p'$ and, in particular, $e$ is directed out of $p_k$. Hence $e=(p_k,p'_j)$ for some $1\leq j\leq k'$. As  $u_i$ is the unique parent of $u_l$ in $\T$, any edge in $\N$ that is directed into a vertex in $\{p'_1,p'_2,p'_{j-1},p'_{j+1},\ldots,p'_{k'}\}$ and does not lie on $p'$ has been deleted in Step (2) above. Moreover, again by  Lemma~\ref{P12} and Corollary~\ref{cor:connect}, any edge in $\N$ that joins a vertex in $d(\x,u_l)$ with a vertex in $d(\x,u_{l'})$, where $u_{l'}$ is a child of $u_l$ in $\T$, is directed out of $p'_{k'}$. Hence, any edge in $\N$ that is directed out of a vertex in  $\{p'_1,p'_2,\ldots,p'_{k'-1}\}$ and does not lie on $p'$ has also been deleted in Step (2) above. For the upcoming construction step, we call  $\{p'_1,p'_2,\ldots,p'_{j-1}\}$ the set of {\it dangling} vertices in $\N_2$ with respect to $u_l$. This set may or may not be empty. Now obtain a graph $\N_3$ from $\N_2$ by deleting the set of dangling vertices in $\N_2$ for each vertex $u_l\in V(\T)\backslash\{u_0\}$. It is  straightforward to check that $\N_3$ is a subdivision of $\T$ and that $\T$ can be obtained from this subdivision by suppressing each vertex with in-degree~1 and out-degree~1 in $\N_3$.  Thus, if $H(\x)=0$, then $\N$ displays~$\T$.

(\(\impliedby\)) Suppose that \(\N\) displays \(\T\). Then there exists a subset $V$ of $V(\N)$ and a subset $E$ of $E(\N)$ such that  $\T$ can be obtained from $\N$ by deleting each vertex in $V$ and each edge in $E$ from $\N$ and, subsequently, suppressing any resulting vertex of in-degree 1 and out-degree 1. Without loss of generality, we choose $V$ such that its size $|V|$ is maximized.  Let $\{u_0,u_1,\ldots,u_{n_\T-1}\}$ be the vertices of $\T$, and let $\{v_0,v_1,\ldots,v_{n_\N-1}\}$ be the vertices of $\N$. Furthermore, let $\T'$ be the graph obtained from $\N$ by deleting each vertex in $V$ and each edge in $E$ from $\N$. By construction, $\T'$ is a subdivision of $\T$. Consider the map $m: V(\T) \rightarrow 2^{V(\T')}$ that maps each $u_i\in V(\T)$ to a subset of $V(\T')$ such that  $m(u_i)$ contains precisely each vertex $v_j$ of $\T'$ with $C_{\T'}(v_j)=C_\T(u_i)$. Now we define $\blue{\x\in\mathbb{B}^m}$. For each  $x_{i,j}$ with $0\leq i\leq n_\T-1$ and $0\leq j\leq n_\N-1$, we set $x_{i,j}=1$ if $v_j\in m(u_i)$ and, $x_{i,j}=0$ otherwise. Moreover, for each  $0\leq j\leq n_\N-1$, we set $x_{n_\T,j}=1$ if $v_j\in V$ and $x_{n_\T,j}=0$ otherwise. Lastly, to keep with the notation introduced in Section~\ref{sec: QUBO Formulation}, let $d(\x,u_i)=m(u_i)$ for each $i$ with $0\leq i\leq n_\T-1$, and let $d(\x,u_{n_\T})=V$.
    
We complete the proof by showing that \(P_I(\x) = 0\) for each \( 1 \leq I \leq 12\).
    \begin{enumerate}[(1)]
        \item \blue{As $\T'$ is a subdivision of $\T$,  the root of $\T'$ is the only vertex of $\T'$ whose set of leaf descendants is $X$ and, so $|d(\x,u_0)|=1$. Furthermore, by construction,} $|d(\x,u_i)|>0$ for each $u_i\in V(\T)$. By Lemma~\ref{P1}, \(P_1(\x) = 0\) follows.
        \item For each vertex $v_j\in V(\T')$ there exists exactly one vertex  $u_i\in V(\T)$ such that $C_{\T'}(v_j)=C_\T(u_i)$. Moreover $d(\x,u_{n_\T})=V$. Hence \(P_2(\x) = 0\) follows from Lemma~\ref{P2}.
        \item  Set \(z_{i,j} = x_{i, \childIndex{1}{j}} x_{i, \childIndex{2}{j}} \), where \(u_i \in V(\T)\) and \(v_j\in V(\N)\) is tree vertex. By Lemma~\ref{P3}, this implies that \(P_3(\x) = 0\). Similarly, set \(z_{i,j} = x_{i, \parentIndex{1}{j}} x_{i, \parentIndex{2}{j}}\), where \(u_i \in V(\T)\) and \(v_j\in V(\N)\) is reticulation vertex. Then, by Lemma~\ref{P5}, we have \(P_5(\x) = 0\). Lastly, set \(\hat{z}_{i,2j} = x_{i,j} x_{i, \childIndex{1}{j}}\) and \(\hat{z}_{i,2j+1} = x_{i,j} x_{i, \childIndex{2}{j}}\), where \(u_i \in V(\T)\) and \(v_j\in V(\N)\) is a tree vertex. It then follows by Lemma~\ref{P8} that \(P_8(\x) = 0\).
        \item Since $\T'$ is a subdivision of $\T$, it follows from the definition of \blue{$m$ (and consequently $d$)} that, for each $0\leq i\leq n_\T-1$, the subgraph of $\N$ that is induced by the vertices in $d(\x,u_i)$ is a directed path. Now, towards a contradiction, assume that there exist \blue{a vertex $u_i$ in $\T$ and a vertex} $v_j$ in $\N$ such that $\{\{v_j,c_1(v_j),c_2(v_j)\}\subseteq d(\x,u_i)$. Since the subgraph of $\N$ that is induced by the vertices in $d(\x,u_i)$ is a directed path, we may assume without loss of generality that there is a directed path $(c_1(v_j)=p_1,p_2,\ldots, p_k= c_2(v_j))$ in $\N$ with $k\geq 2$. In particular, each vertex on this path is contained in $d(\x,u_i)$. It now follows that we can  obtain a subdivision of $\T$ from $\N$ by deleting all vertices in $V\cup \{p_1,p_2,\ldots,p_{k-1}\}$ and edges in $$(E\backslash (v_j,p_k))\cup \{(v_j,p_1),(p_1,p_2),\ldots,(p_{k-1},p_k)\}.$$ This contradicts the maximality of $|V|$. Hence, there exist no two vertices $u_i$ and $v_j$ such that $\{v_j,c_1(v_j),c_2(v_j)\}\subseteq d(\x,u_i)$. An analogous contradiction can be used to establish that there  exist no two vertices $u_i$ in $\T$ and $v_j$ in $\N$ such that $\{v_j,p_1(v_j),p_2(v_j)\}\subseteq d(\x,u_i)$  Thus, by Lemmas~\ref{P4}, \ref{P6}, and \ref{P11}, we have \(P_4(\x) = P_6(\x) = P_{11}(\x) = 0\). 
        \item Again, since $\T^{\prime}$ is a subdivision of $\T$, for each $(u_a, u_b) \in E(\T)$, there exists an edge that joins the terminal vertex $v_r$ of the directed path in $\N$ induced by \(d(\x,u_a)\) to the first vertex $v_s$ of the directed path in $\N$ induced by \(d(\x,u_b)\). We now move towards several contradictions. Assume that there exist more than one edge from a vertex of the directed path in $\N$ induced by \(d(\x,u_a)\) to a vertex of the directed path in $\N$ induced by \(d(\x,u_b)\). This is, there exist $v_q, v_r \in d(\x, u_a)$ and $v_s, v_t \in d(\x, u_b)$ such that $(v_q, v_t), (v_r, v_s) \in E(\N)$. Let $v_q=p_1,p_2,\ldots,p_k=v_r$ be the directed path from $v_q$ to $v_r$ in $\N$ and, similarly, let $v_s=p'_1,p'_2,\ldots,p'_{k'}=v_t$ be the directed path from $v_s$ to $v_t$ in $\N$. Since $\N$ has no edges in parallel, $k\geq 2$ or $k'\geq 2$. If $v_q=v_r$, then $v_s\ne v_t$ and, consequently, a subdivision of $\T$ can be obtained from $\N$ by deleting all vertices in $V\cup \{p'_1,p'_2,\ldots,p'_{k'-1}\}$ and all edges in $E\cup \{(p'_1,p'_2),(p'_2,p'_3),\ldots,(p'_{k'-1},p'_{k'})\}$; thereby contradicting the maximality of $|V|$. Similarly, if $ v_s=v_t$, then we obtain  a subdivision of $\T$ from $\N$ by deleting all vertices in $V\cup \{p_2,p_3,\ldots,p_{k}\}$ and all edges in $E\cup \{(p_1,p_2),(p_2,p_3),\ldots,(p_{k-1},p_{k})\}$; again contradicting the maximality of $|V|$. We may therefore assume without loss of generality that $k \ge 2$ and $k'\geq 2$.  Now recall that $(v_q, v_t)$ and $(v_r, v_s)$ are edges in $\N$. Then a subdivision of $\T$ can be obtained from $\N$ by deleting all vertices in $V\cup \{p_2,p_3,\ldots,p_{k},p'_1,p'_2,\ldots,p'_{k'-1}\}$ and all edges in $E\cup \{(p_1,p_2),(p_2,p_3),\ldots,(p_{k-1},p_{k}),(p'_1,p'_2),(p'_2,p'_3),\ldots,(p'_{k'-1},p'_{k'})\}$; a final contradictions.
        It follows that there exists exactly one edge from a vertex in \(d(\x,u_a)\) to a vertex in \(d(\x,u_b)\). In particular, this edge joins the terminal vertex of the directed path in $\N$ induced by \(d(\x,u_a)\) to the first vertex of the directed path in $\N$ induced by \(d(\x,u_b)\). Hence, by Lemmas~\ref{P7}, \ref{P9}, and \ref{P12}, we have \(P_7(\x) = P_9(\x) = P_{12}(\x)= 0\).

        \item Clearly, for each $u_i\in V(\T)$ that is a leaf, we have $v_j\in d(\x,u_i)$, where $v_j$ is the leaf vertex of $\T'$ (and $\N$) whose label is identical to that of $u_i$. Hence, by Lemma~\ref{P10}, it follows that  \(P_{10}(\x)=0\). 
    \end{enumerate}
\noindent Therefore, if \(\N\) displays \(\T\), then \(H(\x) =0\).
\end{proof}

We end this section by noting that $H(\x)$ contains constant and \cyan{quadratic terms}. Strictly speaking, $H(\x)$ is therefore not in QUBO form. \cyan{However, $H(\x)$ can easily be converted into QUBO form by deleting all the constant terms. The summation of all the constant terms that we deleted is called the offset; we use it during post-processing.}

\subsection{\cyan{Solving our QUBO Formulation using a Quantum Annealer}}

\cyan{A quantum annealer finds the minimum energy state of a QUBO. We saw in the last section that we can convert $H(\x)$ to an objective funciton in QUBO form. To show that we can solve Tree Containment using a quantum annealer, it suffices to show that $H(\x) \geq 0$ for all $\x \in \mathbb{B}^m$.}

\begin{proposition}
    \cyan{$H(\x) \geq 0$ for all $\x \in \mathbb{B}^m$.}
\end{proposition}
\begin{proof}
    \cyan{Let $\x \in \mathbb{B}^m$. By definition, $P_I(\x) \geq 0$ for all $I \in \{1,2,\ldots,10\}$. If $P_I(\x) > 0$ for any $I \in \{1,2,\ldots,10\}$ then $H(\x) >0$, so we assume that $P_I(\x) = 0$ for all $I \in \{1,2,\ldots,,10\}$. Now, it suffices to show that either $P_{11}(\x) > 0$, or $P_{11}(\x) = 0$ and $P_{12}(\x) \geq 0$.}

    \cyan{Suppose $P_{11}(\x) < 0$. Then, for some  $x_{i,j} = 1$, we have $$\sum_{\substack{k=0 \\ k \neq j }}^{n_{\N}-1} g(v_j, v_k) x_{i,k} \geq 2 $$in $P_{11}(\x)$. This implies that $P_3(\x)>0$ or $P_4(\x)>0$, which gives us a contradiction, so $P_{11}(\x)\geq0$. If $P_{11}(\x) > 0$ and $P_I(\x) = 0$ for all $I \in \{1,2,\ldots,10\}$ then $H(\x) > 0$, so we assume that $P_{11}(\x) = 0$.}

    \cyan{Suppose $P_{12}(\x) < 0$. Then, for some edge $(u_i, u_l) \in E(\T)$, we have 
    $$\sum_{j= 0}^{n_{\N}-1} \sum_{ \substack{ k=0 \\ k \neq j}}^{n_{\N}-1} g(v_j, v_k) x_{i,j} x_{l,k} \geq 2$$
    in $P_{12}(\x)$. Since $P_I(\x) = 0$ for all $I \in \{1,2,\ldots,11\}$, it follows by Lemma \ref{P12} that 
    $$\sum_{j= 0}^{n_{\N}-1} \sum_{ \substack{ k=0 \\ k \neq j}}^{n_{\N}-1} g(v_j, v_k) x_{i,j} x_{l,k} \leq 1, $$ which gives us a contradiction, so $P_{12}(\x) \geq 0$.}
\end{proof}

\section{Example and Results} \label{sec: results}
In this section, we present an example to illustrate the QUBO formulation of Tree Containment, \cyan{some minor embedding results, and the post-processing procedure}.
\subsection{Example}

Consider the phylogenetic network \(\N\) on \(X\) and the two phylogenetic \(X\)-trees \(\T_1\) and \(\T_2\) as shown in Figure \ref{fig: example 2 with vertex indecies}. We wish to answer the following two questions: (i) Does \(\N\) display \(\T_1\)? (ii) Does \(\N\) display \(\T_2\)?

\begin{figure}[h]
    \centering
    \begin{subfigure}[b]{0.3\textwidth}
        \centering
        \includegraphics[width=\textwidth]{./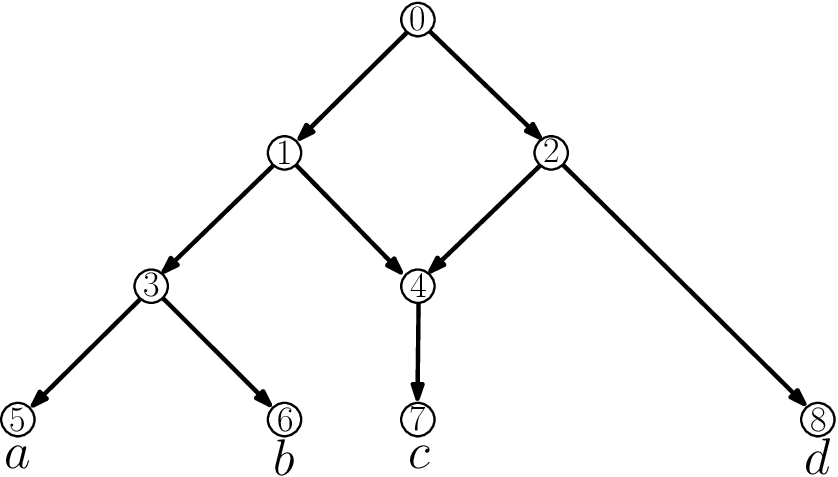}
        \caption{\(\N\)}
        \label{fig: example (a)}
    \end{subfigure}
    \hfill
    \begin{subfigure}[b]{0.3\textwidth}
        \centering
        \includegraphics[width=\textwidth]{./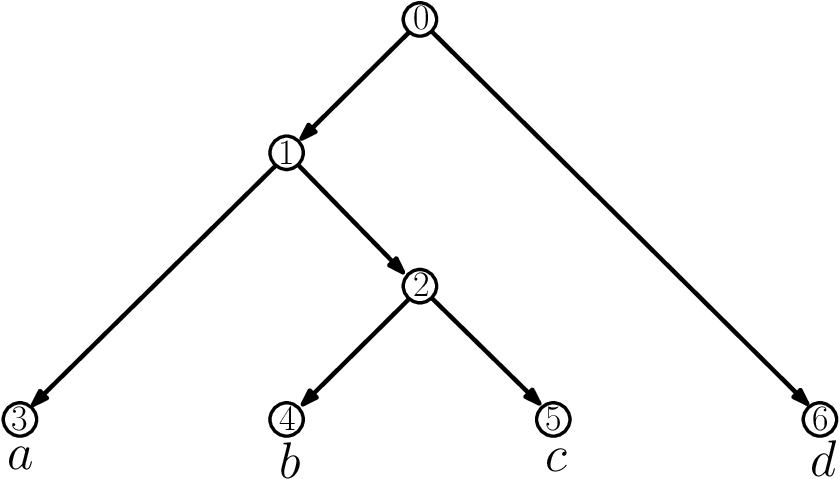}
        \caption{\(\T_1\)}
        \label{fig: example (b)}
    \end{subfigure}
    \hfill
    \begin{subfigure}[b]{0.3\textwidth}
        \centering
        \includegraphics[width=\textwidth]{./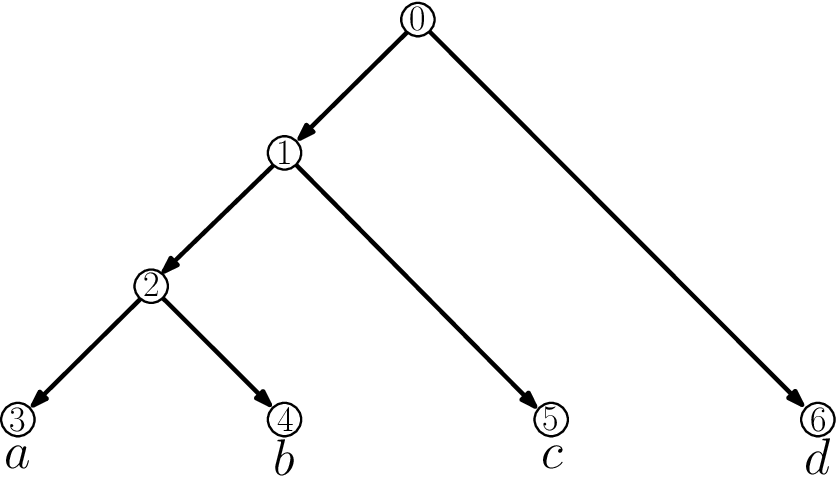}
        \caption{\(\T_2\)}
        \label{fig: example (c)}
    \end{subfigure}
       \caption{A phylogenetic network $\N$ on $X$ in (a) and two phylogenetic $X$-trees $\T_1$ and $\T_2$ in (b) and (c), where \(X = \{a, b, c, d\}\). Each vertex in $\N$, $\T_1$, and $\T_2$ has been assigned a number. The sole purpose of those numbers, which are not vertex labels, is the representation of \(\N\), \(\T_1\), and \(\T_2\) as adjacency matrices.} 
       \label{fig: example 2 with vertex indecies}
\end{figure}

After processing \blue{of $H(\x)$ as described in the last paragraph of Section~\ref{sec:main-result}, the resulting objective function is in} QUBO form. Recall that $$\x \in \mathbb{B}^{n_{\N}(n_{\T}+1) + (n_{\T} - 1 )(1 + \left\lfloor \lg(n_{\N}-n_{\T})\right\rfloor) + n_\T( \alpha + \beta)+ 2 \beta \gamma}.$$ Then, for each of (i) and (ii), we need $$9\cdot(7+1)+(7-1)(1+\lg(2))+7\cdot(1+4)+2\cdot 4\cdot 3=143$$ logical qubits and, so, the size of the output QUBO matrix is \(143 \times 143\). 

For Question (i), D-Wave's quantum annealer returned only one binary vector  \(\x^{*}\)  as follows, \blue{where subscripts refer to the vertex numbers as shown in Figure~\ref{fig: example 2 with vertex indecies}}: \(x_{3,5} = x_{2,4} = x_{1,1} = x_{3,9} = x_{6,8} = x_{4,6} = x_{5,7} = x_{0,0} = x_{6,2} = x_{6,9} = x_{3,3} = 1\) and all  other qubits are 0. After post-processing, we did not get \(0\) as minimum output  \(x^{*}\). We confirmed this result by querying D-Wave's quantum annealer 100 times and, each time, the result was \(x^{*}=1\). Therefore, there is a high probability that $\N$ does not display $\T_1$. Indeed, \(\N\) does not display \(\T_1\).

For Question (ii), D-Wave's quantum annealer returned a binary vector  \(\x^{*}\) with \( x_{3,5} = x_{1,1} = x_{6,8} = x_{4,6} = x_{5,7} = x_{0,0} = x_{5,4} = x_{5,9} = x_{2,3} = x_{6,2} = x_{6,9} = 1\) and all  other qubits are 0. After post-processing, we did get \(0\) as minimum output \(x^{*}\). It follows that \(\N\) displays \(\T_2\) which is indeed the case. 

\subsection{Minor Embedding Results}
As a proof-of-concept, we performed several experiments to investigate the number of logical and physical qubits that are needed to solve an instance of Tree Containment. We start by providing some information about the host graphs. After formulating  an instance of Tree Containment as an instance of QUBO, we used the minor embedding algorithm provided by D-Wave \cite{minorminer}. For some of our test cases (described below), the QUBO matrix could not be embedded into the host graph of D-Wave Advantage. We therefore considered a larger Pegasus graph. This allows us to compare the current capacity of the D-Wave Advantage annealer with a possible future version of the D-Wave machine. Both host graphs that we considered were of Pegasus topology\footnote{\cyan{After this paper was written, a new topology was released by D-Wave called the Zephyr topology, which is an improvement over the Pegasus topology.}}. The first host graph, D-Wave Advantage, has 5640 vertices and 40484 edges, and the second host graph has 23560 vertices and 172964 edges. The first and second host graphs are represented by \(P16\) and \(P32\), respectively. In comparison, \(P32\) has about 4.17 times more vertices and about 4.27 times more edges than \(P16\). For more information about \blue{the} Pegasus topology, see \cite{D-Wave_graphs}.

In total, we have analyzed 15 small instances of Tree Containment, where each instance consists of a phylogenetic network $\N$ on $X$ and a phylogenetic $X$-tree $\T$.  In Table \ref{tbl: density and logical qubits}, the first column contains the size of \(X\), the second column contains the number $r$ of reticulation vertices in \(\N\), \cyan{the \red{third} column contains the number $s$ of additional logical qubits required due to the conversion from cubic to quadratic (it is equal to $n_\T(\alpha + \beta) + 2\beta\gamma$)}, the \red{fourth} column contains the \red{total} number of binary variables in the QUBO instance obtained from applying the approach described in Section \ref{sec: QUBO Formulation} to $\N$ and $\T$, \cyan{and the fifth column contains the number of couplers (non-zero off diagonal entries) in the QUBO matrix.} \blue{Finally, the last column contains the density of the graph $G$ whose weighted adjacency matrix is the QUBO matrix, where the density is defined as the ratio of the size of $G$ and the size of the complete graph whose order is the same as that of $G$.}
Note that, in the approach described in Section \ref{sec: QUBO Formulation}, the number of binary variables in the resulting QUBO instance  depends only \blue{linearly} on the total number of vertices in \(\N\) and \(\T\). As $\N$ and $\T$ both have leaf set $X$, it follows that  $n_\T=2|X|-1$ and $n_\N=2|X|+2r-1$ vertices~\cite[Lemma 2.1]{mcdiarmid2015counting}. 
From our initial experiments, we can see some good news in that the number of logical qubits are  \blue{expectantly small and that the densities are relatively low}.  This later fact \blue{implies that} fewer qubit connections will be required and non-completely connected quantum annealers will have a better chance of embedding the logical structure onto a physical structure.
In Table \ref{tbl: minor embedding results}, we present the minor embedding results for both host graphs \(P16\) and \(P32\). \cyan{For each instance, the minor embedding algorithm \cite{minorminer} was run $10$ times with the \textit{timeout} parameter as $240$ (seconds). We present the best out of the $10$ runs in terms of the physical qubits required by the instance.} The first two columns of this table are identical to the first two columns of Table~\ref{tbl: density and logical qubits}. The  \textit{Physical Qubits} column contains our experimental results indicating the number of physical qubits required to embed a QUBO instance, depending on which host graph was used. The \textit{Max Chain Size} column contains the maximum number of physical qubits a single logical qubit was mapped onto in our experiments. The entries marked by `-' correspond to the cases where the minor embedding algorithm was not able to find a minor embedding. \cyan{Lastly, the \textit{Average Time} column contains the average time taken to get the minor embedding. Note that some of the entries go beyond the time-limit of $240$ (seconds), we think that this is due to the minor embedding algorithm \cite{minorminer} using the \textit{timeout} parameter as a soft bound.}
Regrettably, but not unexpectedly, the number of physical qubits grows beyond the capabilities of current quantum annealing architectures such as those used by D-Wave, even for  small test cases.  However, \blue{it is worth noting that} we do not have exponential growth with respect to the input sizes. 
Furthermore, the relatively large maximum chain sizes is of a concern, but with advances in the expected qubit interconnection density of future hardware (and possibly improved embedding algorithms) this can \blue{likely} be mitigated in practice. 

\begin{table}
    \caption{The number of logical qubits and QUBO density.}
    \label{tbl: density and logical qubits}
    \begin{tabularx}{\textwidth}{  >{\centering\arraybackslash}X
        >{\centering\arraybackslash}X
        >{\centering\arraybackslash}X
        >{\centering\arraybackslash}X
        >{\centering\arraybackslash}X
        >{\centering\arraybackslash}X}
        \toprule
        $|X|$ & $r$ & $s$ & Logical Qubits & Couplers & Density\\
        \midrule
        4  & 1 & 59  & 143  & 897   &  0.088348 \\ 
        4  & 3 & 99  & 221  & 1621  &  0.066680 \\
        4  & 2 & 79  & 185  & 1278  &  0.075088 \\
        5  & 3 & 146 & 320  & 2690  &  0.052704 \\
        5  & 4 & 172 & 374  & 3385  &  0.048530 \\
        5  & 2 & 120 & 274  & 2191  &  0.058581 \\
        6  & 1 & 137 & 313  & 2647  &  0.054211 \\
        6  & 5 & 265 & 557  & 5841  &  0.037721 \\
        6  & 3 & 201 & 435  & 4111  &  0.043551 \\
        7  & 2 & 226 & 500  & 5049  &  0.040473 \\
        7  & 4 & 302 & 644  & 7155  &  0.034558 \\
        7  & 3 & 264 & 566  & 5932  &  0.037099 \\
        8  & 5 & 423 & 879  & 10995 &  0.028493 \\
        8  & 4 & 379 & 803  & 9736  &  0.030236 \\
        8  & 3 & 335 & 713  & 8201  &  0.032309 \\
        \bottomrule
    \end{tabularx}    
\end{table}

\begin{table}
    \caption{Minor embedding results.}
    \label{tbl: minor embedding results}
    \begin{tabularx}{\textwidth}{  >{\centering\arraybackslash}X
        >{\centering\arraybackslash}X
        >{\centering\arraybackslash}X
        >{\centering\arraybackslash}X
        >{\centering\arraybackslash}X
        >{\centering\arraybackslash}X
        >{\centering\arraybackslash}X
        >{\centering\arraybackslash}X}
      \toprule
      \multicolumn{1}{c}{$|X|$}                 &
      \multicolumn{1}{c}{$r$}                   &
      \multicolumn{2}{c}{Physical Qubits }      &
      \multicolumn{2}{c}{Max Chain Size }       &
      \multicolumn{2}{c}{Average Time (seconds) }  \\
      \cline{3-4}
      \cline{5-6}
      \cline{7-8}
         &   & {\(P16\)} & {\(P32\)} & {\(P16\)} & {\(P32\)} & {\(P16\)} & {\(P32\)} \\
      \midrule
      4  & 1 & 639   & 637      & 12       & 12     & 32.3   & 51.3       \\
      4  & 3 & 1094  & 1221     & 13       & 17     & 69.5   & 173.5      \\
      4  & 2 & 910   & 888      & 15       & 14     & 48.5   & 139.4      \\
      5  & 3 & 2529  & 2349     & 27       & 23     & 100.8  & 243.2      \\
      5  & 4 & 2992  & 3333     & 27       & 30     & 94.5   & 243.8      \\
      5  & 2 & 1867  & 1841     & 20       & 19     & 92.0   & 235.9      \\
      6  & 1 & 2651  & 2884     & 28       & 29     & 92.3   & 242.9      \\
      6  & 5 & {-}   & 8796     & {-}      & 65     & 246.9  & 251.2      \\
      6  & 3 & 4428  & 5098     & 39       & 40     & 145.9  & 252.1      \\
      7  & 2 & {-}   & 7073     & {-}      & 53     & 245.8  & 246.7      \\
      7  & 4 & {-}   & 11153    & {-}      & 73     & 248.1  & 247.6      \\
      7  & 3 & {-}   & 8968     & {-}      & 55     & 245.9  & 254.5      \\
      8  & 5 & {-}   & {-}      & {-}      & {-}    & 253.8  & 322.8      \\
      8  & 4 & {-}   & {-}      & {-}      & {-}    & 248.4  & 307.6      \\
      8  & 3 & {-}   & 13276    & {-}      & 55     & 249.1  & 290.1      \\
      \bottomrule
    \end{tabularx}    
\end{table}

\subsection{Post-Processing}
\cyan{Suppose that we have a phylogenetic network $\N$ on $X$ and a phylogenetic $X$-tree $\T$. Using $\N$ and $\T$, we construct $H(\x)$ as described in Section \ref{sec: QUBO Formulation}. We then process $H(\x)$ as described in the last paragraph of Section \ref{sec:main-result}. This gives us an objective function in QUBO form and an offset. In post-processing, we simply add the offset to the minimum value of the objective function. If the offset plus the minimum value of the objective function equals 0, then $\N$ displays $\T$ and, otherwise, $\N$ does not display $\T$.}

\cyan{Suppose $\N$ displays $\T$. We can get an explicit mapping from the vertices of $\T$ to the vertices of $\N$ which shows that $\N$ displays $\T$ using the formulation described in Section \ref{sec: QUBO Formulation}. Let $\x$ be a input value for which the objective function attains its minimum. We use the map $d: (\mathbb{B}^m,
\allowbreak V(\T) \cup \{u_{n_\T}\})\rightarrow 2^{V(\N)}$ as defined in Section \ref{sec: QUBO Formulation} to map the vertices of $\T$ to the vertices of $\N$. For example, the root vertex of $\T$, that is $u_0$, would be mapped to $d(\x, u_0)$, which is a directed path in $\N$. Similarly, we find the mapping for each other vertex in $\T$. We delete all the vertices of $\N$ that are not mapped by any vertex of $\T$ and the edges incident to them. Then, our formulation ensures that we can suppress \red{any vertex with in-degree 1 and out-degree 1 of} the directed path $d(\x, u)$ for any $u \in V(\T)$. 
This would explicitly show that \red{we can derive $\T$ from $\N$ by deleting edges and vertices and suppressing any resulting vertices of in-degree 1 and out-degree 1.} Hence, $\N$ displays $\T$.}

\section{Conclusion} \label{sec: conclusion}
In this paper, we have discussed the AQC model, which has gained popularity in recent years, to solve Tree Containment. Currently, the size of  problem instances (of Tree Containment as well as other problems) that can be solved using quantum annealers is one of the main setbacks that is hindering quantum computing from becoming a mainstream technology. The development of different approaches to build bigger and more efficient quantum annealers is an ongoing and major effort. 

In Section~\ref{sec: TC problem}, we have established an efficient reduction from Tree Containment to QUBO. 
A Python program that reads in a phylogenetic network and tree and outputs the resulting QUBO instance is given in  Appendix \ref{appendix}. 
We have shown that an instance of Tree Containment, say $(\N,\T)$, can be reduced to an instance of QUBO whose number of logical qubits  is $O(n_\N n_\T)$
Furthermore, the reduction from Tree Containment to QUBO takes polynomial time.   
Our reduction has the following special property: solving (minimizing) the QUBO returns 0 after post-processing if and only if \(\N\) displays \(\T\).  In addition, we note that the above QUBO formulation is also correct if \(\T\) has leaf set \(X'\) and \(\N\) has leaf set \(X\) such that \(X' \subset X\). 

In Section~\ref{sec: results}, we have experimentally evaluated the
efficiency of the QUBO formulation for Tree Containment. We have
compared the efficiency in terms of logical qubits, physical qubits,
density, and the maximum chain size. 
Our experiments indicate that current quantum annealers can only solve small
instances of Tree Containment, which we note can also be solved with a classical
approach. Nevertheless, our results provide a first indication that
other problems arising in studying phylogenetic trees and networks could
potentially also be attacked within AQC. For example, many problems that
arise in the reconstruction of phylogenetic networks have  underlying
NP-hard optimization problems. Current algorithms in this area either do
not scale up to large data sets or are heuristics with no guarantee on the
optimality of the solution~\cite{hejase2016scalability}. Hence, AQC offers
a promising and alternative approach to solving such problems. Furthermore,
it would be interesting to explore quantum-classical hybrid approaches,
as discussed in \cite{abbott2019hybrid}. An immediate next step could be
the development of a QUBO for problems that are closely related to Tree
Containment such as the problem of deciding if a given phylogenetic network
contains a given phylogenetic tree as a so-called base tree, or the problem
of deciding if two phylogenetic networks display the same set of phylogenetic
trees~\cite{anaya2016determining,docker2019displaying,francis2015phylogenetic}.
How different are QUBO formulations for these problems from the QUBO
presented in this paper?\\

\noindent{\bf Acknowledgements.} We thank Cristian Calude and Richard Hua for helpful discussions, and two anonymous referees for their constructive comments. The third author was supported by the New Zealand Marsden Fund.

\bibliographystyle{abbrv}

\appendix
\section{Python Program that Generates the QUBO Formulation for an Instance of Tree Containment}\label{appendix}

\red{In addition to the two code listings below, the jupyter notebook at \url{https://colab.research.google.com/drive/1YvyVNXhBcnAItv-_XqGwRcoFOWPk46_T?usp=sharing} illustrates how to convert an instance of the  Tree Containment problem to a QUBO and analyze the output.}

\lstinputlisting[caption={Sample input of a phylogenetic tree and network. Remove the comments before passing to the program.}, upquote=true]{listings/input.txt}

\lstinputlisting[language=Python, caption=Python code to generate the QUBO for an instance of Tree Containment, label={lst:listing-python}, upquote=true]{listings/TC_generate_QUBO.py}

\end{document}